\documentclass[a4paper]{jpconf}
\usepackage{graphicx}

\usepackage{amsmath}
\usepackage{amsthm}
\usepackage{latexsym}
\usepackage{amsfonts}
\usepackage{amssymb}
\usepackage[utf8]{inputenc}

\usepackage{lmodern,bm}

\usepackage{bbm,dsfont}

\usepackage{graphicx}

\usepackage{verbatim}
\usepackage{cite}


 


\newtheorem*{theorem*}{Theorem}
\newtheorem*{proposition*}{Proposition}
\newtheorem*{remark*}{Remark}





\usepackage{color}

\usepackage[normalem]{ulem} 


\newcommand{\mods}[1]{\left \vert #1 \right \vert ^2}

\newcommand{\hi}{\mathcal{H}} 
\newcommand{\his}{\mathcal{H}_{\mathcal{S}}}
\newcommand{\hir}{\mathcal{H}_{\mathcal{R}}}
\newcommand{\hia}{\mathcal{H}_{\mathcal{A}}}
\newcommand{\hik}{\mathcal{K}} 

\newcommand{\Y}{\yen}

\newcommand{\lh}{\mathcal{L(H)}} 
\newcommand{\lhs}{\mathcal{L}(\mathcal{H}_{\mathcal{S}})} 

\newcommand{\lhr}{\mathcal{L}(\hir)} 
\newcommand{\lk}{\mathcal{L(K)}} 

\newcommand{\trh}{\mathcal{T(H)}} 
\newcommand{\sh}{\mathcal{S(H)}} 
\newcommand{\ip}[2]{\left\langle\,#1\,|\,#2\,\right\rangle} 

 
\newcommand{\id}{\mathbbm{1}} 

\newcommand{\Esf}{\mathsf{E}}

\newcommand{\Fsf}{\mathsf{F}}
\newcommand{\Zsf}{\mathsf{Z}}

\newcommand{\Sy}{\mathcal{S}}
\newcommand{\Ap}{\mathcal{A}}
\newcommand{\Bp}{\mathcal{B}}

\newcommand{\R}{\mathcal{R}}
\newcommand*\colvec[3][]{
    \begin{pmatrix}\ifx\relax#1\relax\else#1\\\fi#2\\#3\end{pmatrix}
}

\newcommand{\var}{\textrm{Var}} 




\newcommand{\E}{\mathsf{E}}
\newcommand{\F}{\mathsf{F}}
\newcommand{\Q}{\mathsf{Q}}


\newcommand{\U}{\mathcal{U}} 




\begin{document}
\title{A relational perspective on the Wigner-Araki-Ya- nase theorem}

\author{Leon Loveridge}

\address{Quantum Technology Group, Department of Science and Industry Systems, University of South-Eastern Norway, 3616 Kongsberg, Norway}

\ead{leon.d.loveridge@usn.no}

\begin{abstract}
We present a novel interpretation of the Wigner-Araki-Yanase (WAY) theorem based on a relational
view of quantum mechanics. Several models are analysed in detail, backed up by general considerations, which serve to illustrate that the moral of the WAY theorem may be that in the presence of symmetry, a measuring apparatus must fulfil the dual purpose of both reflecting the statistical behaviour of the system under investigation, and acting as a physical reference system serving to define those quantities which must be understood as relative. 
\end{abstract}

\emph{In this way quantum theory reminds us, as Bohr has put it, of the old wisdom that when searching for harmony in life one must never forget that in the drama of existence we are ourselves both players and spectators.}\hfill Werner Heisenberg \cite{heis}\\

\begin{center}{\bf Dedicated to the memory of Paul Busch} \end{center}

\section{Introduction}
In this paper we present an overview of some ideas developed in the decade $2008$--$2018$, which form part of a large and ongoing programme aimed at providing a novel understanding of quantum theory based on what one might call a ``symmetry-enforced relationalism". Some of this work was included in the author's PhD thesis, and was undertaken in collaboration with Takayuki Miyadera and the author's supervisor and friend Paul Busch, whose attention, guidance and encouragement made the work what it is, and is now so sorely missing it often feels impossible to proceed.

The work presented in this contribution to the volume dedicated to Paul remains unfinished, with both technical and interpretational work remaining. The idea is to combine the point of view that the world described by quantum mechanics is a deeply and  irreducibly relational one, with a technical result in the quantum theory of measurement called
the Wigner-Araki-Yanase (WAY) theorem, which imposes measurement limitations in the presence of conserved quantities, in the hope of both better understanding the message of this important theorem, and of gaining more general qualitative insight into the relationship between the formal aspects of quantum theory and the world that it is supposed to describe.

We will argue that the WAY theorem may be seen as an expression of the fact that the quantum theoretical formalism as it is traditionally used speaks not of quantum systems as they are,
but of the relation between a quantum system and a reference system which possesses certain properties
that we are accustomed to viewing as ``classical". This point of view, which emerged from our mutual discussion
and reflection over the years, was referred to by Paul with the acronym ``RRR", to stand for
\emph{radically relational reality}, the consequences of which we had only seriously begun to consider. Of course broadly speaking such an attitude is not new: Bohr emphasised very early
the epistemological view that quantum phenomena must be understood {\it only} in relation to the 
``whole experimental arrangement" \cite{bor2} and therefore pertain to both system and measuring apparatus combined -- one cannot speak of system without apparatus. There are many other forms of quantum ``relationalism" which are morally in harmony
with what we are advocating, for instance the relational quantum mechanics of Rovelli \cite{rov1}
and the perspectival approach of Bene and Dieks \cite{ben1,die1,die2}. However, the specific form
of relationalism presented here and in our other works (e.g., \cite{shortp,long}) is founded on symmetry in a fundamental way, and in that respect has more in common with the large body of work on \emph{quantum frames of reference} \cite{ak1,ak2,brs} -- a topic which also forms a major part of the current investigation.

After briefly introducing the rudiments of the quantum theory of measurement, we present (section \ref{sec:way})
the WAY theorem as it appears in \cite{LB2011,lb2,ldlthesis,blcon}, with some additional comments on developments in the years after those papers were published. The message of this work is principally that sharp observables 
not commuting with a conserved quantity cannot be measured exactly, 
but that good approximation is possible provided that the apparatus is able to be, and is, prepared
in a state with large fluctuation in the conserved quantity. In section \ref{sec:sym} we present two arguments for the adoption of the view that ``truly" observable quantities are invariant under
symmetry transformations. The first arises from the basic observation that this is standard in gauge theories, and in a Galilean spacetime context (for example) the possibility of measuring ``absolute" position or time (for instance) is at odds with Galilean relativity. The second is dynamical, 
and can be stated as a theorem in the quantum theory of measurement wherein the system under investigation possesses its own conserved quantity. In both cases, we find that reference frames/systems play a crucial role
in resolving the tension between the unobservability of ``absolute" (or non-invariant) quantities,
and the routine use of such quantities in the accurate description of the physical world. We find
that non-invariant quantities are theoretical shorthands for the true, relational/invariant quantities which relate system and reference. With this in hand, we will argue (section \ref{sec:relw}) that the WAY theorem
is an expression of the relativity, or relationalism, of physical quantities under symmetry,
where the measuring apparatus is also required to fulfil the role of reference. We conclude  
with a brief summary and discussion of further avenues of investigation for the RRR programme.

\section{Preliminary material}

\subsection{Basic framework}

To a quantum system $\Sy$ is assigned a complex, separable Hilbert space $\hi$ with inner product $\ip{\cdot}{\cdot}: \hi \times \hi \to \mathbb{C}$, linear in the second argument (and hence conjugate linear in the first). In case $\Sy$ consists of exactly two separate subsystems so that $\Sy$ can be viewed as a composite of $\Sy_1$ and $\Sy_2$, $\hi$ takes the form of the tensor product $\hi_1 \otimes \hi_2$. States of $\Sy$ are
represented by (and identified with) positive operators in the trace class $\trh$ of $\hi$ which have unit trace. This is a convex subset of $\trh$, which we will write as $\sh$. Pure states 
are singled out as the one dimensional projections, to be written as $P_{\varphi}$, with $\varphi$ a unit vector in $\hi$, and we will occasionally indulge the traditional laissez-faire inaccuracy and refer to 
$\varphi$ itself as a pure state of $\Sy$ (sometimes also to be called a \emph{vector state}).
 
Observables are identified with positive operator-valued measures ({\sc pom}s) $\Esf: \mathcal{F} \to \lh$, where $\mathcal{F}$ is a $\sigma$-algebra of subsets of some set $\Omega$ which represents the \emph{value space} of $\Esf$, i.e., the space of possible outcomes that may arise in a measurement of $\Esf$, and $\lh$ denotes the (algebra of) bounded linear operators in $\hi$. We recall that {\sc pom}s are positive in the sense that
all operators in the range of a {\sc pom} are positive (we write $\lh ^+$ for the positive elements of $\lh$), normalised ($\Esf (\Omega) = \id_{\hi}$) and have the property of being countably additive. The operators in the range of a {\sc pom} are called \emph{effects}.

That we use $\mathcal{F}$ rather than $\Omega$ owes itself to the probabilistic nature of the basic framework, to which we will shortly turn. \emph{Sharp} observables are singled out amongst all observables as the projection-valued measures ({\sc pvm}s), characterised as those {\sc pom}s $\Esf$ for which $\Esf(X)^2 = \Esf(X)$ for all $X \in \mathcal{F}$, or equivalently, $\Esf(X \cap Y) = \Esf(X)\Esf(Y)$ for all $X,Y \in \mathcal{F}$. In this case, and if $(\Omega, \mathcal{F}) = (\mathbb{R}, \mathcal{B}(\mathbb{R}))$ (the real numbers and the corresponding Borel sets) or subsets thereof, then $\Esf \equiv \Esf^A$ is the spectral measure of 
some unique self-adjoint operator $A \in \lh$, where $A$ and $\Esf^A$ are connected via the spectral integral $A = \int x \Esf^A(dx)$. We also refer to $A$ itself as an observable
when it is more convenient.

The state-observable pairings $(\rho, \Esf)$ thus defined capture the empirical content of the  theory via the ``Born rule" probability measures $X \mapsto \tr [\rho \Esf (X)]$, understood as representing the probability that if the system is prepared in a state $\rho$ and a measurement of $\Esf$ is performed, this measurement yields an outcome in the set $X$. Indeed, for an arbitrary mapping $\Esf: \mathcal{F} \to \lh ^+$, the assignment
$X \mapsto \tr [\rho \Esf (X)]$ being a probability measure for all states guarantees that $\Esf$ is a {\sc pom}, highlighting that this description of observables saturates the full probabilistic content of the Hilbert space framework. For a detailed account of the conceptual and mathematical  development of the use of {\sc poms} in the foundations of quantum mechanics and measurement we refer the reader to \cite{oqp,paulmmt,pbook}.

\subsection{Measurement}

Measurements of quantum objects inevitably involve an interaction of the object $\Sy$ under investigation with some second system, to be called the \emph{measuring apparatus} $\Ap$ which fulfils the dual role of being ``quantum enough" to interact properly with $\Sy$ (specifically, to become entangled with $\Sy$, which is indeed necessary for the measurement process---as information extraction---to take place -- see \cite{buschnie}) and ``classical enough" to produce measurement records. We will not discuss the problem of how such desiderata may be satisfied (a difficult question to answer, even in principle \cite{paulmmt,mitt}), but rather proceed to model such a situation operationally \cite{oqp}.

Given some observable $\Esf$ of $\Sy$ the question therefore naturally arises as to the existence
of a suitable apparatus (or \emph{probe}) system $\Ap$ which may be used to faithfully reproduce the statistics of $\Esf$, and therefore to function as a measurement of $\Esf$. Put more precisely, one wishes to specify an observable $\Zsf$ of $\Ap$ (to be called a \emph{pointer observable}), here assumed to be the spectral measure of a self-adjoint operator $Z$ (an assumption which will soon be justified) acting in $\hia$, a state $\rho_{\Ap} \in \mathcal{S}(\hia)$, and a suitable unitary mapping $U(\tau)$
acting in $\his \otimes \hia$ which serves to correlate $\Sy$ and $\Ap$ in a way which we will shortly spell out. The $\tau$ in appearing in $U(\tau)$ is a parameter related to the interaction strength and duration, which we will sometimes suppress, and suitable here means that we may arrange for the pre-interaction
statistics of $\Esf$ of $\Sy$, namely, the probability measures $X \mapsto \tr [\rho \Esf (X)]$ to be equal
to the post-interaction probabilities for the apparatus observable, i.e., 
$X \mapsto \tr \left(\Zsf (X)\tr _{\Sy} (U(\tau ) \rho \otimes \rho_{\Ap} U(\tau)^*) \right)$. 
This amounts to the statement of the \emph{probability reproducibility condition} (PRC) which serves as a minimal operational condition for the tuple $\mathcal{M}:= \langle \hia, U(\tau), Z, \rho_{\Ap} \rangle$, called \emph{measurement scheme} \cite{oqp}, to serve as a measurement of $\Esf$. This 
we write as
\begin{equation}\label{eq:prcm}
\tr [\rho \Esf (X)] = \tr [U(\tau)\rho \otimes \rho _{\Ap}U(\tau)^* \id \otimes \Zsf(X)].
\end{equation}

The existence of such an apparatus which satisfies the above condition is established by two central mathematical results. The first is \emph{Naimark's dilation theorem}, which states that given a {\sc pom} $\Esf:\mathcal{F}\to \lh$ there exists
a Hilbert space $\hik$, a {\sc pvm} $\Fsf: \mathcal{F} \to \hik$ and an isometry $V:\hi \to \hik$ for which 
\begin{equation}
\Esf(X)=V^*\F(X)V.
\end{equation}
The second, which builds upon this, is found in the remarkable contribution of Ozawa \cite{ozcont}, who showed that $\Esf$ always admits a \emph{measurement dilation}---a Naimark dilation of a specific form---comprising a Hilbert space $\hia$ for which $\hik = \hi \otimes \hia$, a fixed unit vector $\phi \in \hia$, and $V$ which can be chosen to be of the form
$V=UV_{\phi}$ with $U$ a unitary mapping $\hik \to \hik$ and $V_{\phi}: \hi \to \hik$ the isometric embedding $V_{\phi}(\varphi)=\varphi \otimes \phi$. Moreover, it can be arranged so that $\F(X) = \id \otimes \Zsf(X)$ where $\Zsf$ is the spectral measure of a self-adjoint operator $Z$. Thus we may write the probability reproducibility condition in this case (dropping the $\tau$ for now) and for pure states as
\begin{equation}\label{eq:pr}
\ip{\varphi}{\Esf(X) \varphi} = \ip{U(\varphi \otimes \phi)}{\id \otimes \Zsf (X) U(\varphi \otimes \phi)},
\end{equation}
stipulated to hold for all unit vectors $\varphi \in \hi$ and $X \in \mathcal{F}$.  This states that the probabilities for the pointer observable
$\Zsf$ in the vector state $U \varphi \otimes \phi$ perfectly reproduce those of $\E$ in the vector state $\varphi$,
and that therefore the measuring process for $\Zsf$ (and thus for $Z$) in the final state can be understood 
as a measuring process for $\E$ in the initial state. There are two conceptually distinct readings
of the PRC \eqref{eq:pr}. One is that the right hand side (i.e., $U, Z$ and $\phi$, and thus implicitly also $\hia$) is chosen so as to be a measurement (scheme) for some fixed $\Esf$ of interest to an experimenter. The other is to first fix $\mathcal{M}$, which reflects the common situation in empirical practice, wherein only certain apparatus preparations, dynamics, {\it etc.}, are available. Then the right hand side of \eqref{eq:pr} uniquely identifies the observable $\Esf$ actually measured by this scheme, which may or may not function as a good measurement of the observable of interest. 

Note that whilst the measurement outcome probabilities are fully characterised by the measured observable,
the measurement scheme provides a far more comprehensive account of the entire measurement process
and the information transfer between system and probe. The measurement scheme is not uniquely determined by the measured observable, and therefore it is often useful to consider the equivalence class of operationally indistinguishable schemes.

Occasionally it is necessary to include a measurable function $f: \Omega_{\Ap} \to \Omega$ in the definition of a measurement scheme in order to account for the measured observable and the pointer having different value spaces or scales; in this case equation \eqref{eq:pr} becomes 
\begin{equation}\label{eq:sca}
\ip{U\varphi \otimes \phi}{\id \otimes \Zsf(f^{-1}(X))U\varphi \otimes \phi}=\ip{\varphi}
{\Esf(X) \varphi}.
\end{equation} 

The second point of view mentioned above is perhaps more clearly expressed at the operator level by the equation
\begin{equation}\label{prc0}
\Esf(X)=V_{\phi}^*U^* (\id \otimes \Zsf(X))UV_{\phi};
\end{equation}
the mapping $\Gamma_{\phi} (\cdot) \equiv V_{\phi}^* (\cdot) V_{\phi} : \lk \to \lh$  is a (normal) conditional expectation and can be viewed
as a ``restriction" mapping on observables, arising from fixing a state of $\hia$. The observable defined through restriction may not
be equal to the ``target" observable which we actually wish to measure; if not, then statistical techniques are required in order to quantify the discrepancy between the observable we wish to measure (which will often be sharp) and the actually measured observable $\Gamma_{\phi}(U^*\id \otimes \Zsf(X))U)$, which may be unsharp and function as an unsharp approximator for $\Esf$.
Writing $Z(\tau) = U^*\id \otimes \Zsf(X))U$ highlights the fact that the measured observable
$\Esf$ is then just the restriction of the time-evolved pointer observable. 

The restriction map is not limited to describing measurement processes, can be naturally extended to mixtures,
and represents a natural means by which to describe a subsystem ``as though" it were being considered in isolation. There are various equivalent definitions which will be employed.
 It is the dual of the isometric embedding $\mathcal{S}(\hi) \to \mathcal{S}(\hi \otimes \hia)$ defined by
$\rho \mapsto \rho \otimes \sigma$. It is the continuous linear extension of the map
$A \otimes B \mapsto A \tr [B \sigma]$, and finally, it may be defined through the formula
\begin{equation}\label{eq:red}
\tr [\rho \Gamma_{\sigma}(\Lambda)] = \tr [\rho \otimes \sigma \Lambda],~\text{for all}~\rho \in
\mathcal{S}(\hi) ~\text{and} ~\Lambda \in \mathcal{L}(\hi \otimes \hia).
\end{equation}
For a vector state $\phi$ we continue to write $\Gamma_{\phi}$ rather than $\Gamma_{P_{\phi}}$.
In this work, we view the restriction as an \emph{externalisation} of the second system, i.e., a way to describe the subsystem contingent upon some preparation of the other system, which is being suppressed in the theoretical description but nevertheless functions crucially in the ascription of properties of $\Sy$ which pertain also to the state of $\Ap$.

We conclude this section by noting that the \emph{repeatability} of a measurement, which is often assumed to comprise part of the definition (of measurement) in standard textbook treatments, is not logically implied by the probability reproducibility condition and must therefore be viewed as an additional stipulation which a measurement may or may not satisfy. We recall that repeatable measurements are defined
as those for which the same outcome is guaranteed (with probability $1$) to be realised upon immediate repetition, i.e.,
\begin{equation}\label{eq:rep}
 \ip{U(\varphi \otimes \phi)}{\Esf(X) \otimes \Zsf (X) U(\varphi \otimes \phi)}= \ip{\varphi}{\Esf(X) \varphi},
\end{equation}
holding for all $\varphi$ and $X$.
Indeed, with the broader definition of observable than is traditional (i.e., the representation by {\sc pom}s), it is possible that no repeatable measurements exist at all. This is the case if, for example, some of the effects in the range of an observable do not have $1$ as an eigenvalue. Moreover, as shown by Ozawa in his 1984 contribution \cite{ozcont}, observables with continuous spectrum do not admit \emph{any} repeatable measurements, and therefore the extension of the notion of measurement to include those which are not repeatable is a crucial ingredient in the description
of physical experiments.

Historically, the question of whether all observables can, at least theoretically, be measured, perhaps also repeatably, was raised for the first time some twenty years after the landmark text of von Neumann \cite{vN}, which takes as a working hypothesis that all self-adjoint operators correspond to observable quantities. The Wigner-Araki-Yanase theorem rules this out, and is the subject of the next section.

\section{The Wigner-Araki-Yanase theorem}\label{sec:way}

\subsection{Background}
The framework presented above demonstrates that, in principle, any observable can be measured 
by specifying an appropriate measurement scheme. However, it may not be practicable to arrange
a suitable scheme for measuring arbitrary observables (for instance, it may be difficult to prepare the required $\phi$ in practice) and, {\it a fortiori}, there may be {\it fundamental} restrictions to the realizability of the elements of $\mathcal{M}$ required for a measurement (scheme) for $\Esf$, arising for instance from their incompatibility with other basic principles (see, e.g., \cite{fewster} for a recent example in quantum field theory). 

Two such proposed fundamental restrictions were raised in 1952, with Wigner involved in both.
One is the limitation imposed by a \emph{superselection rule} \cite{www}, which reduces the notion
of observable to include only those self-adjoint operators which sit in the commutant of some particular superselection
``charge" or, equivalently, prevents the preparation of coherent superpositions of eigenstates
of differing superselection charge. This subject has its own long and controversial history; see \cite{long}
and references therein for a recent review and analysis of this story, to aspects of which we will also return
in the present manuscript.

The other 1952 paper on fundamental obstructions to measurement was authored solely by Wigner 
(see \cite{wig1} for the original article, and \cite{buschtrans} for a facsimile-style translation into English), and investigated the possibility that conservation laws may interfere with the existence of measurement schemes satisfying the PRC. 

Specifically, Wigner observed that 
a simple \emph{repeatable} measurement, given as a specified state evolution resulting in a transformation of von Neumann/L\"{u}ders type, of the $x$-component of spin of a spin-$\frac{1}{2}$ particle conflicts with angular momentum conservation. We note four key observations made by Wigner in \cite{wig1}: 1) That the specified model of an $S_x$ measurement violates the conservation of $S_z \otimes \id +\id \otimes L_z$, the latter denoting the ($z$-component of) the angular momentum of the measuring apparatus, 2) That a modification of the dynamics so as to incorporate an ``error" state of the apparatus allows the conservation law to be satisfied, but the accuracy and repeatability of the measurement are lost, 3) Good accuracy and repeatability properties can be recovered to an arbitrarily high degree
if the initial state of the apparatus has a large uncertainty with respect to $L_z$, which can be viewed as a ``size" requirement on the apparatus and 4) That
if the repeatability condition is dropped completely, then perfectly accurate measurements are possible, even under the constraint of the conservation law, without any size condition on the apparatus. 

Regarding the fourth point, Wigner made the following observation (see also \cite{yanase}):
the reason for the measurement limitation is the noncommutativity between the spin
components of the system under investigation, {\it viz} $[S_x,S_z]\neq 0$. Dropping the repeatability condition in his model inevitably leads to a state evolution in which the final pointer states
are eigenstates of a quantity not commuting with $L_z$. Hence the question arises of how to measure the pointer.

The topic of conservation-law-induced limitations to quantum measurements  has been revisited in the literature until the present day. The main advancement after Wigner's paper was the 1960 contribution of Araki and Yanase \cite{ay1},
in which Wigner's example was shown to be an instance of a theorem about quantum measurements in general, wherein points 1) to 4) above are all seen to generalise in a natural way. See also \cite{LB2011} for a detailed discussion and reconstruction of Wigner's early work, a generalisation of the main result of \cite{ay1} which will play an important role in this manuscript, and a critical examination of the evolution of the WAY theorem up to 2011.

We now present the Wigner-Araki-Yanase theorem. The setting is that there is a system $\Sy$, an apparatus $\Ap$, and a quantity $L = L_{\Sy} \otimes \id + \id \otimes L_{\Ap} \equiv L_{\Sy} +L_{\Ap}$ which is \emph{conserved} in the sense that $[U,L]=0$. Following Ozawa \cite{oz02},
we shall call the condition that $[Z,L_{\Ap}]=0$ (which may or may not be satisfied) the 
\emph{Yanase condition}.

\subsection{WAY theorem for discrete observables}

The WAY theorem may be stated in the quantum theory of measurement as follows.

\begin{theorem*} 
Let $\mathcal{M} := \left\langle \mathcal{H}_{\Ap}, U, \phi, Z, \right\rangle$ be a measurement scheme for a discrete-spectrum self-adjoint operator $A$ on $\mathcal{H_S}$, and let $L_{\Sy}$ and $L_{\Ap}$ be bounded self-adjoint operators on $\mathcal{H_S}$ and $\mathcal{H_A}$, respectively, such that $[U, L_{\Sy} +L_{\Ap}] = 0$. Assume that $\mathcal{M}$ is repeatable or satisfies the Yanase condition. Then $[A,L_{\Sy}]=0$.
\end{theorem*}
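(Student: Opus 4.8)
The plan is to reduce the claim $[A,L_{\Sy}]=0$ to the vanishing of the off-diagonal matrix elements $\ip{\varphi_m}{L_{\Sy}\varphi_n}$ of $L_{\Sy}$ between eigenvectors of $A$ belonging to distinct eigenvalues. Since $A$ has discrete spectrum I write $A=\sum_k a_k P_k$ with $P_k=\Esf^A(\{a_k\})$ the orthogonal eigenprojections, and put $X_k:=\{a_k\}$ and $\Psi_k:=U(\varphi_k\otimes\phi)$ for $\varphi_k$ a unit eigenvector in the $a_k$-eigenspace. Establishing $\ip{\varphi_m}{L_{\Sy}\varphi_n}=0$ for all $\varphi_m,\varphi_n$ drawn from eigenspaces with $m\neq n$ yields $P_m L_{\Sy} P_n=0$, forcing $L_{\Sy}$ to be block-diagonal in the eigenbasis of $A$ and hence to commute with $A$.

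First I would extract the single identity on which everything rests. Writing $L:=L_{\Sy}\otimes\id+\id\otimes L_{\Ap}$ and using $[U,L]=0$ together with unitarity to replace $\Psi_m,\Psi_n$ by $\varphi_m\otimes\phi,\varphi_n\otimes\phi$, and then discarding the $\id\otimes L_{\Ap}$ contribution via $\ip{\varphi_m}{\varphi_n}=0$, a direct computation gives
\begin{equation}\label{eq:waycons}
\ip{\Psi_m}{(L_{\Sy}\otimes\id+\id\otimes L_{\Ap})\Psi_n}=\ip{\varphi_m}{L_{\Sy}\varphi_n}.
\end{equation}
The remaining task is to show that the left-hand side vanishes, which I would do by killing its two summands separately. (Boundedness of $L_{\Sy},L_{\Ap}$ makes all these manipulations domain-free.)

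Second, I would dispose of the system term $\ip{\Psi_m}{(L_{\Sy}\otimes\id)\Psi_n}$ using only the probability reproducibility condition \eqref{eq:pr}. Because $\Esf^A$ is sharp and the pointer $\Zsf$ is a {\sc pvm}, setting $\varphi=\varphi_m$ and $X=X_m$ in \eqref{eq:pr} gives $\ip{\Psi_m}{(\id\otimes\Zsf(X_m))\Psi_m}=\ip{\varphi_m}{P_m\varphi_m}=1$; since $\id\otimes\Zsf(X_m)$ is a projection and $\Psi_m$ a unit vector, this forces $(\id\otimes\Zsf(X_m))\Psi_m=\Psi_m$, i.e. the final state is supported in pointer sector $m$. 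Inserting these supports on both sides of the matrix element and using that $L_{\Sy}\otimes\id$ commutes with $\id\otimes\Zsf(X_k)$, together with $\Zsf(X_m)\Zsf(X_n)=0$ for $m\neq n$, then gives $\ip{\Psi_m}{(L_{\Sy}\otimes\id)\Psi_n}=0$.

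Third, I would dispatch the apparatus term $\ip{\Psi_m}{(\id\otimes L_{\Ap})\Psi_n}$, and this is where the two hypotheses enter, each through a different mechanism; I expect this to be the conceptual crux. Under the Yanase condition $[Z,L_{\Ap}]=0$, hence $[\Zsf(X_k),L_{\Ap}]=0$, I would reuse the pointer supports of the previous step so that $\Zsf(X_m)L_{\Ap}\Zsf(X_n)=L_{\Ap}\Zsf(X_m)\Zsf(X_n)=0$. Under repeatability I would instead strengthen the support statement: putting $\varphi=\varphi_m$, $X=X_m$ in \eqref{eq:rep} gives $\ip{\Psi_m}{(P_m\otimes\Zsf(X_m))\Psi_m}=1$, so $(P_m\otimes\Zsf(X_m))\Psi_m=\Psi_m$ and the final state now also lives in system eigenspace $m$; the orthogonality $P_m P_n=0$ then annihilates the apparatus term. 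Either way the right-hand side of \eqref{eq:waycons} vanishes, yielding $\ip{\varphi_m}{L_{\Sy}\varphi_n}=0$ for all $m\neq n$ and hence $[A,L_{\Sy}]=0$. The delicate points to watch are that discreteness of $A$ is what licenses the singletons $X_k$ and genuine eigenprojections, that sharpness of $\Zsf$ is essential for the ``expectation one implies eigenvector'' step, and that a pointer scale function $f$ (cf.\ \eqref{eq:sca}) changes only the bookkeeping, replacing each $X_k$ by the disjoint set $f^{-1}(X_k)$.
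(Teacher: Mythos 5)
Your proposal is correct and follows essentially the same route as the proof the paper relies on (the paper defers to \cite{LB2011}, where the argument is exactly this: the conservation law plus orthogonality of distinct eigenvectors reduces $\ip{\varphi_m}{L_{\Sy}\varphi_n}$ to $\ip{\Psi_m}{L\Psi_n}$, the PRC forces the final states into orthogonal pointer sectors, and the Yanase condition or repeatability then annihilates the apparatus term). Your identification of where each hypothesis enters, and the remarks on sharpness of $\Zsf$ and on the scale function $f$, match the standard treatment.
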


See \cite{LB2011} for a proof and discussion of some generalisations. The most important one for 
us is that the unboundedness of $L_{\Ap}$ can be easily accommodated by taking the exponential 
of $L_{\Ap}$ (which gives a bounded operator) and running the argument as before -- see footnote 4 in \cite{ay1}. The theorem entails that if $[A,L_{\Sy}] \neq 0$, then any measurement of $A$ is necessarily non-repeatable,
and must violate the Yanase condition. The \emph{SWAP map}, defined when $\his \cong \mathcal{H_A}$ by $\varphi \otimes \phi \mapsto \phi \otimes \varphi$, is perhaps the simplest example of such a realization \cite{LB2011}. 

The Yanase condition was not known to be a condition in the statement of the WAY theorem until fairly recently (its first appearance in that capacity is in \cite{belt}; see also \cite{LB2011}) and is worth dwelling on further, as it will play an important role in the sequel. From a physical perspective, such a requirement appears to be natural: applying the WAY theorem to the pointer (imagine measuring the pointer) prohibits repeatable measurements of $Z$, which surely nullifies its purpose as a stable
pointer that can be repeatedly checked by an experimenter. 

The statement of the WAY theorem does not rule out approximate
measurements of $A$, understood in the sense that there may be some unsharp observable $\Esf$
whose statistics are close to those of (the spectral measure of) $A$ and which may nevertheless \emph{not} commute with $L_{\Sy}$. Moreover, the WAY theorem as stated also does not prohibit such approximate measurements
from having approximate repeatability properties. Such a possibility can indeed be realised, and can be viewed as a positive counterpart to the strict impossibility verdict, in line with Wigner's earlier observations. Even if $[A,L_{\Sy}]\neq 0$, approximate measurements of $A$, with approximate repeatability properties, are feasible, with increasingly good approximation properties becoming possible with a large ``spread" of the apparatus part of the conserved quantity in the initial state of the apparatus. Large spread in those early works corresponded to a large number of components of the given state when written as a linear combination of eigenstates of the conserved quantity with different eigenvalue. Other measures are also possible---variance for instance---and as argued in \cite{LB2011}, such a spread, appropriately quantified, is \emph{necessary} for good measurements. 

The case of position measurements obeying momentum conservation is not covered by the WAY theorem as presented above, yet is of clear physical importance. This case also carries an interesting message regarding
the relational perspective on the WAY theorem, with which we will eventually make contact, and therefore we will discuss the position-momentum case in some detail next.

\subsection{Position and momentum}

WAY-type constraints persist also in the continuous-variable and unbounded case, not covered by the original proofs, as reported for the first time in \cite{lb2}; we will be interested here in position (measurements) and momentum (conservation). We present a model due to Ozawa (see \cite{Ozawa} and \cite{lb2,ldlthesis} for further analysis of this model) which demonstrates the presence of WAY-type limitations
for position measurements, and contrast this with a second model \cite{oqp} which, although
also has momentum conservation built in, shows no such WAY-type behaviour. We then present 
a heuristic argument in favour of WAY-type constraints persisting in general for $Q$ and $P$.

\subsubsection{Two models} 

Ozawa's measurement scheme is characterised by the following unitary coupling 
acting in $\his \otimes \hir \otimes \mathcal{H}_{\Ap}\otimes \mathcal{H}_{\Bp}$ (we omit the identity factors and tensor product symbols for notational clarity):
\begin{equation}
U=e^{i \frac{\lambda}{2}(Q-Q_{\mathcal{R}})(Q_\Ap - Q_\Bp)}.
\end{equation}
All Hilbert spaces are taken to be isomorphic to $L^2(\mathbb{R})$, understood as position representation spaces for $\his$ and $\hir$, and momentum representation spaces for 
$ \mathcal{H}_{\Ap}$ and $\mathcal{H}_{\Bp}$. We may therefore write $ \mathcal{H}_{\Ap} \otimes \mathcal{H}_{\Bp} \cong L^2 (\mathbb{R}^2)$ with $\mathbb{R}^2$ parametrised by
$(p_{\Ap},p_{\Bp})$. Then under the orthogonal transformation $(p_{\Ap},p_{\Bp}) \mapsto (u,v)= \frac{1}{\sqrt{2}}(p_{\Bp} - p_{\Ap},p_{\Bp} + p_{\Ap})$, we have $ \mathcal{H}_{\Ap} \otimes \mathcal{H}_{\Bp} \cong \tilde{\mathcal{H}}_{\Ap} \otimes \tilde{\mathcal{H}}_{\Bp} \cong L^2(\mathbb{R}^2)$ with coordinates $(u,v)$, which are spectral values of 
$P_\Bp-P_\Ap$ and $P_\Bp+P_\Ap$ respectively. Finally, the pointer observable is
chosen to be $Z = P_\Bp-P_\Ap$, which satisfies the Yanase condition (we ignore subtleties arising from domain questions). We continue to denote by $\Zsf$ the spectral measure of $Z$. Crucially, $U$ conserves the total momentum $P +P_{\R}+P_{\Ap}+P_{\mathcal{B}}$.

The scheme is constructed so as to measure (an unsharp approximation of) $Q$---the position observable of $\Sy$---and therefore the 
remaining systems $\R$, $\Ap$ and $\mathcal{B}$ are all treated as part of the apparatus (though the $\R$ system does not comprise part of the pointer and plays a role which we will investigate in more detail in due course). The apparatus is initially prepared in the vector state
$\phi \otimes \xi_a \otimes \xi_b \in \hir \otimes \tilde{\mathcal{H}}_{\Ap} \otimes \tilde{\mathcal{H}}_{\Bp}$, and the measured observable is extracted from the probability reproducibility condition (where we in this instance label identity factors to properly identify subsystems)
\begin{equation}
\ip{\varphi}{\Esf (X) \varphi} = \ip{U (\varphi \otimes \phi \otimes \xi_a \otimes \xi_b)}{\id _{\his} \otimes \id _{\hir} \otimes \Zsf (f^{-1}( X)) \otimes \id _{\mathcal{H}_{\Bp}} U (\varphi \otimes \phi \otimes \xi_a \otimes \xi_b)},
\end{equation}
and can be shown \cite{ldlthesis} to be of the form  
\begin{equation}
\Esf (X) =  (\chi _X * e^{(\lambda)})(Q),
\end{equation}
where $\chi _X$ denotes the characteristic set function, $*$ is convolution and the scaling function $f$ (cf.\ eq.~\eqref{eq:sca}) has been chosen as $f(x)=-(2/{\lambda})x$. Note that the $\xi_b$ plays no role at all. The {\sc pom} $\Esf$ represents a \emph{smeared position observable} \cite{oqp,pbook}, with $e^{(\lambda)}$ a density (or confidence function), itself also a convolution given by
\begin{equation}
e^{(\lambda)}(x) =  \bigl|{\phi}\bigr|^2*\bigl|{\xi_a ^{(\lambda)}}\bigr|^2(x),
\end{equation} 
where $\xi_a ^{(\lambda)}(s) = \sqrt{\lambda}\xi_a (\lambda s)$.
The smearing, 
or inaccuracy, is dictated
by the (spread of the) density function $e^{(\lambda)}$; the more tightly peaked  
 $e^{(\lambda)}$, as quantified for instance by the variance $\var(e^{(\lambda)})$, the better the measurement of $Q$. This is understood in the sense of statistical proximity between $\Q$ (the spectral measure of $Q$) and the actually measured observable $\Esf$. 
 
In the above case we find that $\var(e^{(\lambda)}) = \var \mods{\phi} + \frac{4}{\lambda ^2}\mods{\xi_a}$, which is strictly positive, and therefore $\mathcal{M}$ always realised an approximate determination of position. Since $\lambda$ can be made arbitrarily large by choosing a suitably large interaction strength,
$\var(e^{(\lambda)})$ is therefore bounded below by the spread (as variance) of the distribution 
$\mods{\phi}$, which can be made small only by a high degree of localisation with respect to position $Q_{\R}$ or, by the uncertainty relation, a large uncertainty/delocalisation/spread with respect to $P_{\R}$, providing the continuous-variable analogue of the positive part of WAY.

A better measure of spread than variance (or standard deviation) of a probability distribution is the \emph{overall width} $\mathcal{W}(e;1-\epsilon)$ at confidence level $1 - \epsilon$, defined as the smallest possible size of an interval $J$
for which $\int_J e(x) dx \geq 1-\epsilon$ (see \cite{pbook}). The overall width is finite for all $\epsilon > 0$, and the overall width of a convolution of probability measures is bounded below by the largest, i.e.,
$\mathcal{W}(e^{\lambda} ;1-\epsilon) \geq \max \{\mathcal{W}(\mods{\phi}; 1-\epsilon),\mathcal{W}(\frac{4}{\lambda ^2}\mods{\xi_a};1-\epsilon) \}$.
Hence, $\mathcal{W}(e^{\lambda} ;1-\epsilon) \geq \mathcal{W}(\mods{\phi}; 1-\epsilon)$. In other words, the quality of the approximation of $\Q$ by $\Esf$ is dictated precisely by the $Q_{\R}$ localisation of $\phi$ and is bounded from below by this quantity. We see therefore that
it is the second ($\R$) system which plays a vital role in the quantitative part of WAY here, and \emph{not} the $\Ap$ and $\mathcal{B}$ systems pertaining to the pointer. 

The Ozawa model can be contrasted with another model \cite{oqp} (again we simplify the notation by omitting tensor product and identity symbols) described by (with $\lambda > 1$)
\begin{equation}
U = \exp \left[ {-i\frac{\lambda}{2}\left((Q-Q_{\Ap})P_{\Ap} + P_{\Ap}(Q-Q_{\Ap}) \right)}\right],
\end{equation} 
with pointer $Q_{\Ap}$. This model manifestly conserves momentum, and clearly violates the Yanase condition. This model may be viewed as a kind of ``symmetrised version" of the position measurement of von Neumann as described in the final few pages of his classic text \cite{vN}.\footnote{Von Neumann's model is described by the unitary coupling $U(\lambda) = e^{i \lambda Q \otimes P_{\Ap}}$. As pointed out in \cite{buschnid}, von Neumann compared the
particle position and the pointer reading after the cessation of the interaction period, thereby
analysing the \emph{repeatability} properties of the scheme. Had he also investigated the correlation of the position \emph{before} the interaction to the pointer afterwards, i.e., 
performed a calculation associated with the PRC, he may well have discovered the representation of observables by {\sc poms} in 1932, around 40 years before their eventual introduction.} 

The measured observable is again extracted from the PRC by noting that the initial
state of the system and apparatus combined $\Psi _0 = \varphi \otimes \phi \equiv \varphi \phi \in L^2(\mathbb{R}) \otimes L^2(\mathbb{R})$ (with position-representation wavefunction $\Psi _0 (x,y)$) evolves to $\Psi_t (x,y) = e^{\frac{\lambda t}{2}}\varphi(x)\phi \left((1-e^{\lambda t})x + e^{\lambda t}y\right)$
and the measurement is completed at $t=1/2$. The scaling function in this model is
$f(x) = (1-e^{-\lambda})^{-1}x$, and the observable measured is again a smeared position
$\Esf(X) = (\chi_X * e^{(\lambda)})(Q)$, with $e^{(\lambda)}$ the confidence function  $e^{(\lambda)}(x) = (e^{\lambda} - 1)\vert{\phi (-x)(e^{\lambda} -1))}\vert ^2$ (recall that $\phi$ represents the initial state of $\Ap$.)

This model is therefore arbitrarily accurate for arbitrarily large interaction strength $\lambda$ (in the sense that the actually measured observable $\Esf(X) \equiv \chi _X * e^{(\lambda)}(Q)$ can be made as close as one likes to the sharp $\chi_X(Q) \equiv \mathsf{Q}(X)$), and as it turns out has arbitrarily good repeatability properties, both without any condition on $\phi$.\footnote{This model played the central role in providing an ``approximate solution" of an old question
of Shimony and Stein \cite{ssp} regarding the (im-) possibility of a two-valued (coarse-grained) position measurement respecting the conservation of momentum. The problem was posed in the research problems section of \emph{The American Mathematical Monthly}, which contained ``easily stated research problems dealing with notions ordinarily encountered in undergraduate mathematics." 
It reappeared in 2009 \cite{beq} as the first of a list of six unsolved ``bequest" problems of
 Shimony, based presumably on a 2006 talk of his at the Perimeter Institute. The question of whether the solution presented
in \cite{LB2011} (which is not in perfect accordance with the stated problem but is operationally indistinguishable from a positive answer to it) was---to the enormous regret of the author---never put to Abner.} Therefore, with the Yanase condition violated, arbitrarily accurate position measurements can be realised whilst respecting the conservation of momentum, and there is no WAY-type limitation in this model.\footnote{That the inaccuracy scales with $e^{-\lambda}$ in this model, in comparison to von Neumann's model which scales with $\lambda ^{-1}$ (and does not conserve momentum) was pointed out to be surprising by Paul in a private communication in 2009 -- surprising because one would have thought, on WAY-type grounds, that incorporating momentum conservation would make matters worse, not exponentially better. In fact, this observation led Paul to doubt his old result
in \cite{oqp} on the scaling behaviour in the momentum-conserving model, which in turn led to the author's first investigation into WAY-type limitations for continuous observables, from which a large body of work has followed. Of course, in 2009 we did not know about the role of the Yanase condition: Paul's old calculation was right, and by violating the Yanase condition good measurements are feasible without any constraint on the apparatus.} See \cite{ldlthesis, oqp, lb2, pbook} for further discussion. 

\subsubsection{A general inequality}\label{ssqi}
Finally we give an argument using an inequality due to Ozawa \cite{oz02}, based the 
concept of a \emph{noise operator} $N:=Z(\tau) - Q$. As an unbounded operator there are of course
sensitive issues here relating to domains, which we continue to surreptitiously ignore. The use of such a quantity has also been (correctly) criticised (see, for example, \cite{brmp}) in general, the reason being that the observables being compared (in this case, $Z(\tau)$ and $Q \otimes \id$ ) do not necessarily commute, therefore requiring separate experimental procedures for the following estimate. Hence what comes next must be taken as heuristic and not operationally meaningful in general.
Setting $\epsilon ^2 := \sup_{\varphi} \epsilon (\varphi)^2:=\ip{\varphi \otimes \phi}{N^2 \varphi \otimes \phi} \geq (\Delta N)^2$,
the uncertainty relation then gives
\begin{equation}\label{eq:gto}
\epsilon ^2 \geq \epsilon (\varphi)^2 \geq \frac{1}{4} \frac{\mods{\langle [Z(\tau)-Q, P + P_{\Ap}]\rangle}}{(\Delta P_T)^2};
\end{equation}
the supremum is taken over unit vectors and is assumed to be finite across all states, with
$\epsilon$ then understood as a global measure of ``error", and $(\Delta P_T)^2 = (\Delta _{\varphi}(P))^2 + (\Delta _\phi (P_{\Ap}))^2$. The measurement is deemed accurate exactly when $\epsilon = 0$. From here we observe that i) if the right hand side is non-zero for some system states, there is a measurement limitation, ii) that if the numerator 
is non-vanishing, the only way to make the lower bound to the error small, without reference to the properties of the system, is by making  $(\Delta _\phi (P_{\Ap}))$ large, and (iii) that the numerator $[Z(\tau) - Q,P+P_{\Ap}]$ vanishes if $[Z,P_{\Ap}]=i$, which is a statement of the violation of the Yanase condition and is exactly the situation encountered in the second model above. Therefore, the $Q,P$ case in many ways reflects the discrete/bounded scenario (subject to some caveats; see \cite{LB2011}).

We finally mention that Eq. \eqref{eq:gto} generalises, subject to the same objections, to arbitrary quantities, giving
rise to an inequality of the following form:
\begin{equation}\label{eq:tot}
\epsilon ^2 \geq \epsilon (\varphi)^2 \geq \frac{1}{4} \frac{\mods{\langle [Z(\tau)-A, L_{\Sy} + L_{\Ap}]\rangle}}{(\Delta L)^2},
\end{equation}
where again the measurement is accurate exactly when $\epsilon = 0$, with a broadly similar conclusion to above.

\subsection{Further developments}

From 2011 until the present, there have been several interesting developments in relation
to the WAY theorem. Technical progress for instance includes the generalisation of the impossibility part due to Tukiainen \cite{mikko}, where additivity and even conservation are dropped but the theorem nevertheless retains its main flavour, and the generalisation due to \L{}uczak \cite{luc} based on von Neumann algebras and instruments. The WAY theorem has inspired further analysis of the case of energy conservation in \cite{pop1}, and has even found applications in the rapidly emerging field of quantum thermodynamics; see, e.g., \cite{hamdog,ham2}.
 
Perhaps the main lines of development have come via the general perspective that conservation 
laws are instances of symmetry constraints. The 2013 contribution of Ahmadi, Jennings and Rudolph \cite{ahmadi1} proposes that the WAY theorem may be understood in the context of \emph{resource theory}---a subject which is by now a research programme in its own right (see \cite{gour1} for a recent review)---in which there is collection of preparations, transformations, measurements which
are ``permissible" from the point of view of some constraint, and everything else is viewed 
as a \emph{resource}. In this instance, permissible states and observables are symmetry-invariant,
and the large spread in the conserved quantity of the initial state of the apparatus is then the asymmetric resource, which can be appropriately quantified. 

There remains a large variety of views on the ``true meaning" of the WAY theorem. These include
the resource-theoretic perspective mentioned above, the more informational point of view as 
first proposed in \cite{spek1}, and the informal connection to measurement disturbance \cite{pop1}, to name a few.
However, it has been intimated by many that the WAY theorem is not intuitive, or at least that the conceptual message is unclear. Here we put forward the idea that in certain cases, namely when the 
conserved quantity has a (covariant) conjugate quantity, the WAY theorem is a statement of
the limitation on the possibility of ``absolute" quantities being well approximated by
invariant quantities of system-plus-reference, wherein the reference is also playing the role of measuring apparatus. We begin with a general analysis of observability under symmetry, putting WAY to one side for a moment.

\section{Symmetry and observability}\label{sec:sym}

We investigate the relationship between three important notions: symmetry, the natural relativity of certain physical quantities, and the requirement of reference frames or, more properly, reference systems, for the relativity and symmetry to manifest.

\subsection{Observables as invariants}

There are strong reasons to adopt the view that in the presence of 
symmetry, \emph{only} those quantities (represented as self-adjoint operators or {\sc pom}s) which are invariant under relevant symmetry transformations should be considered to have the status of (being) observable. This is standard in the case of gauge
symmetry (e.g., \cite{haag}), and can be argued more generally. For example, ``absolute" position or angle or time are not observable classically, 
and are manifestly non-invariant quantities under the relevant transformations. On the other hand, 
\emph{relative} position, angle and time (see \cite{time} for an analysis of the case of time in quantum theory), are invariant and observable. The non-observability of ``absolute" quantities is in accordance with fundamental (in this case, space-time) symmetries. Therefore we see that many
standard theoretical notions such as position acquire their meaning \emph{only} in relation
to other physical objects. Classically, the observable relative position can be replaced with the 
non-observable absolute position by suppressing the (variables of the) second system which functions as a \emph{reference} with no difficulty. The question
of whether this is possible in quantum mechanics is one of the main questions addressed 
in \cite{long}, to which we will repeatedly refer. 

The situation in quantum theory is complicated by the fact that the reference system
should (must?) be treated as a quantum system, and is therefore subject to properties
held by quantum systems, such as indeterminacy, unsharpness, and entanglement, among others \cite{long}. On the other hand,
physicists routinely use ``absolute" quantities in the description of microscopic physical systems, seemingly without contradiction or even imprecision.

There is another reason for demanding the invariance of observables, which comes from conservation
and from the quantum theory of measurement. Suppose first that $\Sy$ itself has a 
conserved quantity, which we continue to write as $L_{\Sy}$. Then the following theorem
shows that \emph{any} observable (sharp or unsharp) must commute with $L_{\Sy}$ in order to be measured (or, we argue, deemed truly (an) ``observable") -- a situation
which was dubbed the ``strong WAY theorem" in \cite{long} due to the stronger form of the conservation law than in the original WAY theorem (where the conservation is stipulated only for system and apparatus combined). 
\begin{proposition*}
Consider a measurement scheme $\mathcal{M}$ for $\Esf$, and write
$U_{\Sy}(\ell) = e^{i \ell L_{\Sy}}$ for some real parameter $\ell$ and suppose that $[U, U_{\Sy}(\ell)]=0$ for all $\ell$ \emph{(}this is equivalent to the conservation of $L_{\Sy}$\emph{)}. Then 
\begin{equation}
U_{\Sy}(\ell)^*\Esf (X) U_{\Sy}(\ell) = \Esf (X),
\end{equation}
for all $\ell$ and $X$. 
\end{proposition*}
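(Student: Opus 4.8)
The plan is to read off the invariance of $\Esf$ directly from the operator-level probability reproducibility condition \eqref{prc0}, which tells us that the measured observable is
\begin{equation*}
\Esf(X) = V_{\phi}^* U^* (\id \otimes \Zsf(X)) U V_{\phi},
\end{equation*}
with $V_{\phi}\varphi = \varphi \otimes \phi$. Since everything in sight is bounded once we pass to the unitary group $U_{\Sy}(\ell) = e^{i\ell L_{\Sy}}$, the calculation sidesteps any domain subtleties that the generator $L_{\Sy}$ might otherwise introduce, so I would work throughout with $U_{\Sy}(\ell)$ rather than $L_{\Sy}$ itself.

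First I would establish the intertwining relation between $V_{\phi}$ and the system symmetry. For any $\varphi$,
\begin{equation*}
V_{\phi} U_{\Sy}(\ell) \varphi = (U_{\Sy}(\ell)\varphi) \otimes \phi = (U_{\Sy}(\ell) \otimes \id)(\varphi \otimes \phi) = (U_{\Sy}(\ell) \otimes \id) V_{\phi} \varphi,
\end{equation*}
so that $V_{\phi} U_{\Sy}(\ell) = (U_{\Sy}(\ell) \otimes \id) V_{\phi}$ and, taking adjoints, $U_{\Sy}(\ell)^* V_{\phi}^* = V_{\phi}^* (U_{\Sy}(\ell)^* \otimes \id)$.

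Next I would conjugate $\Esf(X)$ by $U_{\Sy}(\ell)$ and push the symmetry through each factor. Substituting the two intertwining identities turns $U_{\Sy}(\ell)^* \Esf(X) U_{\Sy}(\ell)$ into $V_{\phi}^* (U_{\Sy}(\ell)^* \otimes \id) U^* (\id \otimes \Zsf(X)) U (U_{\Sy}(\ell) \otimes \id) V_{\phi}$. The hypothesis $[U, U_{\Sy}(\ell) \otimes \id] = 0$ (which is precisely the conservation of $L_{\Sy}$) lets me commute $U_{\Sy}(\ell)\otimes\id$ past $U$, so the two copies of $U_{\Sy}(\ell)\otimes\id$ migrate inward to sandwich $\id \otimes \Zsf(X)$. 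The crucial final observation is that $U_{\Sy}(\ell)$ acts only on the system factor while $\Zsf(X)$ acts only on the apparatus factor, so $(U_{\Sy}(\ell)^* \otimes \id)(\id \otimes \Zsf(X))(U_{\Sy}(\ell) \otimes \id) = \id \otimes \Zsf(X)$ trivially. What remains is exactly $V_{\phi}^* U^* (\id \otimes \Zsf(X)) U V_{\phi} = \Esf(X)$.

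There is no real obstacle here beyond bookkeeping: the argument is a pure covariance computation and, unlike the WAY theorem proper, it needs neither repeatability nor the Yanase condition, which reflects why the conclusion is so much stronger. The only point demanding care is the justification of the commutation steps at the level of the unbounded generator; restricting attention to the bounded unitaries $U_{\Sy}(\ell)$ throughout, and only afterwards noting that $U_{\Sy}(\ell)^*\Esf(X)U_{\Sy}(\ell) = \Esf(X)$ holding for all $\ell$ is equivalent to $[\Esf(X), L_{\Sy}] = 0$, keeps everything rigorous.
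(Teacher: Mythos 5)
Your proof is correct and is essentially the paper's argument transposed to the operator level: the paper substitutes $\varphi \mapsto U_{\Sy}(\ell)\varphi$ in the vector-state form of the PRC and observes that the left-hand side is unchanged (by commuting $U_{\Sy}(\ell)\otimes\id$ past $U$ and cancelling it against $\id\otimes\Zsf(X)$), which is exactly the computation you carry out on $V_{\phi}^*U^*(\id\otimes\Zsf(X))UV_{\phi}$. If anything, working directly with \eqref{prc0} is marginally tidier, since it yields the operator identity without the implicit appeal to the fact that a bounded self-adjoint operator is determined by its quadratic form.
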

\begin{proof}
Without loss of generality suppose that $f = 1$.  Then the PRC is
\begin{equation}
\ip{U \varphi \otimes \phi}{\id \otimes \Zsf (X) U \varphi \otimes \phi} = \ip {\varphi}{\Esf (X) \varphi};
\end{equation}
substituting $\varphi$ for $U_{\Sy}(\ell)\varphi$ leaves the left hand side of the above unchanged, and hence the right hand side is unchanged as well, proving the result. 
\end{proof}
This holds for general observables and conserved quantities; we make no assumptions about their spectra
or even their boundedness. The result imposes stronger constraints than the WAY theorem, ruling out any measurements
of any (sharp or unsharp) quantities not commuting with a conserved quantity. Note that this is strongly reminiscent of the situation encountered when there is a superselection rule.

The two perspectives (on the unobservability of certain theoretical quantities) are of a different
source and nature. However, in both cases there is a tension between the apparently defensible
viewpoint that ``absolute" quantities such as position do not represent observable quantities,
and the routine use of such quantities in the (accurate) description of the physical world. And in
both cases, we will argue, the resolution is found via the introduction of quantum frames
of reference, which we now discuss.

\subsection{Quantum reference systems and relationalism}

That certain physical quantities obtain their meaning only in relation to a standard
of reference is by now presumably uncontroversial. However, there are differing viewpoints on whether such references are to be understood as abstract ``coordinate systems" or as physical
objects. Our position is that the latter should be the starting point: that relations between
physical entities is primary, and that abstract coordinatisation is an idealisation that must reflect actual physical possibility. For instance, in a spatial Newtonian setting, positions of objects
are understood as relative to other objects, these objects being localised in space and any of which can serve as a reference. Once it is established that such physical references exist and function as they do, the process of abstraction to ``mathematical" references can take place and
these references are, in the setting just discussed, systems of Cartesian coordinates. 

As Einstein has stressed in the classical relativistic setting, ``Every description of events in space involves the use of a rigid body to which such events have to be referred. The resulting relationship takes for granted that the laws of Euclidean geometry hold for ``distances", the ``distance" being represented physically by means of the convention of two marks on a rigid
body" \cite{ein}. That 
the step of abstraction is possible appears to be so natural in classical physics that it is seldom considered. Einstein's remark encapsulates the operational philosophy: the relative position is understood as an observable quantity and one must therefore provide a physical means by which to measure it. The theoretical use of the absolute position is only justified as a mathematical ``shorthand", if indeed it is justified.

If quantum theory is understood to be universally applicable, the reference bodies must
be quantum physical systems in the world, governed by the laws of quantum theory. The manoeuvre 
by which the reference system is externalised/abstracted must then be carefully examined and justified. 
We discuss this possibility now, referring again to \cite{long} for the full treatment, and 
\cite{shortp} for the condensed version.

The observation that symmetry (invariance) and the relativity of quantities seem to come together
is interesting and worth investigating in its own right. We proceed under the point of view that
the symmetry is somehow ``prior", if only because one can formulate the question we wish to discuss
in a clean way, and because it may be more general. 

\subsection{Relativisation}

Suppose we have some group $G$ (with some properties which we shall soon discuss)
acting both in the value space $\Omega$ (with $\sigma$-algebra $\mathcal{F}$) of some observable 
$\Fsf$ of $\R$ and via strongly continuous unitary representations $U_{\Sy}$ and $U_{\R}$
in the corresponding spaces, and suppose moreover that $\Fsf$ is \emph{covariant} under the action of $G$: 
\begin{equation}
U_{\R}(g) \Fsf (X) U_{\R}(g)^* = \Fsf (g.X)~\text{for all}~g\in G,~X \in \mathcal{F}.
\end{equation}
If we can identify $\Omega$ and $G$ then the following map may be used to construct invariant quantities in $\Sy + \R$ from arbitrary ones of $\Sy$:\footnote{The unusual notation owes itself 
to a suggestion of Busch, circa 2011, that since $\Y$ generalises and makes rigorous the map $\$$ appearing in \cite{brs}, and since the map may be accredited to Miyadera who introduced it in a visit to York in 2011, a currency exchange may be appropriate.}
\begin{equation}\label{eq:dyen}
\Y : \lhs \to \mathcal{L}(\his \otimes \hir); ~~~~A \mapsto \int_{G}U_{\Sy}(g)AU_{\Sy}(g)^* \otimes \Fsf (dg).
\end{equation}
This extends naturally to {\sc pom}s by setting $(\Y \circ \Esf)(X):= \Y (\Esf (X))$.
Equation \eqref{eq:dyen} represents the integral of an operator-valued function with respect to an operator measure; see \cite{long} for the construction. We note that a sufficient condition for the integral
to be well defined is that $G$ is compact and metrisable if $\hir$ is finite \cite{davies}, and that $G$ is abelian and second countable in the general case \cite{long}. The mapping $\Y$ can be seen to enjoy a host of agreeable 
qualities: it is (completely) positive, normal, unital, $^*$-preserving, and moreover
an algebraic $^*$-homomorphism if $\Fsf$ is projection-valued. Crucially, as is easily confirmed,
$\Y(A)$ is invariant under $\Y(A) \mapsto U_{\Sy}(g)\otimes U_{\R}(g)\Y(A) U_{\Sy}(g)^*\otimes U_{\R}(g)^* \equiv \alpha_g(A)$.

The $\Y$ map therefore ascribes to each {\sc pom} of $\Sy$ a new observable of $\Sy + \R$ which is invariant under the given unitary representation of $G$. As will be exemplified below, we view this procedure as the explicit incorporation of a system of reference into the
theoretical description.

We give some examples. Suppose $G = (\mathbb{R}, +)$ (the additive group on the real line; we henceforth denote this group by $\mathbb{R}$) with strongly continuous unitary representations
$U_{\Sy}$ and $\U_{\R}$ acting in $\his$ and $\hir$, respectively. We write $U_{\Sy}(x) = e^{iP_{\Sy}x}$ and $U_{\R}(x) = e^{iP_{\R}x}$, where $P_{\Sy}$ and $P_{\R}$ are shift-generators, interpreted physically as momenta. Choosing $\Fsf$ to be the sharp position observable $\Fsf = \Esf^{Q_{\R}}$, we find that
\begin{equation}
(\Y \circ \Esf^{Q_{\Sy}})(X) = \int_{\mathbb{R}} \Esf^{Q_{\Sy}}(X+x)\otimes \Esf^{Q_{\R}}(dx) = \int_{\mathbb{R}} \int_{\mathbb{R}} \chi_X(x^{\prime}-x)\Esf^{Q_{\Sy}}(dx^{\prime})\otimes \Esf^{Q_{\R}}(dx)  = \Esf^{Q_{\Sy}-Q_{\R}}(X). 
\end{equation} 
Since $\Esf^{Q_{\Sy}-Q_{\R}}$ is the spectral measure of the self-adjoint relative position
observable $Q_{\Sy}-Q_{\R}$, we observe that $\Y$ in this case has the effect of ``relativising" the absolute position $Q_{\Sy}$, yielding $Q_{\Sy}-Q_{\R}$.

We may also consider the case of the circle $S^1$, defined by the interval $(-\pi, \pi]$ with the end points identified and addition mod $2\pi$. As a group this is then isomorphic to $U(1)$ and in the example we are about to give, $\hir \cong L^2(S^1)$. Choosing $\Fsf = \Esf^{\Phi_{\R}}$ to 
be the spectral measure of the azimuthal angle operator $\Phi$ conjugate to (say) the $z$-component 
$L_z$ of angular momentum, we have 
\begin{equation}
(\Y \circ \Esf^{\Phi_{\Sy}})(X) = \Esf^{\Phi_{\Sy} - \Phi_{\R}}(X);
\end{equation}
here, $\Esf^{\Phi_{\Sy} - \Phi_{\R}}$ is the spectral measure of the relative angle 
$\Phi_{\Sy} - \Phi_{\R}$, and therefore we may write $\Y(\Phi_{\Sy}) = \Phi_{\Sy} - \Phi_{\R}$.

We also mention the physically important quantity of phase conjugate to number, which generalises the angle/angular momentum pair; we write
$\Fsf:\mathcal{B}(S^1) \to \lhr$ for such a quantity. Conjugacy here means that phase shifts are generated by
number operators, i.e., $e^{iN_{\R}\theta}\Fsf(X)e^{-iN_{\R}\theta}=\Fsf(X+\theta)$, where addition is understood modulo $2 \pi$. The relativisation $\Y$ gives rise to a relative phase observable \cite{rp1,rp2}, as shown in \cite{long}. 
 If the spectrum of $N_{\R}$ is bounded from below, as in the harmonic
oscillator for example, $\Fsf$ cannot be sharp. Note that if $N_{\R}$ can be defined in a Hilbert space
of low dimension, in which case $\Fsf$ is highly unsharp -- the spin phase \cite{oqp} acting in $\mathbb{C}^2$ is such an example. 

Therefore we observe that the effect of relativising ``absolute" position and phase (with angle as the special case where $N_{\R}$ is two-sided-unbounded) is to
produce relative position and phase, as one might expect from a relativising map.
However, $\Y$ is not confined to acting only on {\sc pom}s which have familiar relative counterparts as (the spectral measures of) difference operators, or already established relative versions such as relative phase, and is quite general. We therefore view any observable obtained under $\Y$ as relative, or \emph{relational}, and, even more generally, \emph{any} invariant 
quantity of $\Sy + \R$.\footnote{It is known that in general $\Y$ is not surjective on the invariant part of $\mathcal{L}(\his \otimes \hir)$ \cite{LLandJW}.} Again,
we refer to \cite{ldlthesis,long} for further discussion and examples.

\subsection{Localisation and approximation}

One of the main results of \cite{long} is in regard to the possibility of making
$A$ and $\Y(A)$ statistically close (or more generally with $A$ replaced by a {\sc pom}). The interest in this question stems directly from the fact that in standard applications of quantum mechanics, 
``absolute" quantities such as position and phase are empirically adequate 
theoretical objects, yet, as we know, cannot be represented in physical reality because they are not invariant. The question therefore arises as to whether, and under what conditions, the unobservable ``absolute" and the truly observable relative quantities can at least approximately agree.

Using the $\Y$ map we find that good approximation is possible \emph{provided} the relativising quantity $\Fsf$ possesses a technical localisability property
called the norm-$1$ property (e.g., \cite{no1}), namely, that for any $X$ for which $\Fsf(X) \neq 0$, there
exists a sequence of unit vectors $(\phi_n) \subset \hir$ for which $\lim _{n \to \infty}\ip{\phi_n}{\Fsf (X) \phi_n}=1$. We note that this is satisfied for all {\sc pvm}s.  
Then we have 
\begin{theorem*}
Let $G$ be $\mathbb{R}$ or $S^1$ and $\Gamma$ be defined as in equation \eqref{eq:red}. There is a sequence of unit vectors $(\phi_n) \subset \hir$ for which
\begin{equation}
\lim_{n \to \infty}(\Gamma_{\phi_{n}}\circ \Y)(A) = A,
\end{equation}
where the convergence is understood weakly, i.e., in the topology of pointwise convergence of expectation values.
\end{theorem*}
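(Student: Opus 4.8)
The plan is to evaluate the composite map $\Gamma_\phi \circ \Y$ explicitly, to reduce the desired convergence to a statement about concentration of probability measures on $G$ at the group identity, and finally to invoke the norm-$1$ property to manufacture the required sequence.

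First I would compute $(\Gamma_\phi \circ \Y)(A)$. Writing $\mu_\phi(X) := \ip{\phi}{\Fsf(X)\phi}$, normalisation of $\Fsf$ makes $\mu_\phi$ a probability measure on $G$. Since $\Gamma_\phi$ is the continuous (normal) linear extension of $C \otimes D \mapsto C\,\ip{\phi}{D\phi}$, applying it to the operator integral \eqref{eq:dyen} defining $\Y(A)$ gives
\begin{equation}\label{eq:composite}
(\Gamma_\phi \circ \Y)(A) = \int_G U_{\Sy}(g)\, A\, U_{\Sy}(g)^*\, d\mu_\phi(g),
\end{equation}
a $\mu_\phi$-average of the unitary conjugates of $A$ (the passage of $\Gamma_\phi$ through the integral is justified by testing \eqref{eq:composite} against arbitrary states via \eqref{eq:red}). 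Because $U_{\Sy}(e) = \id$ at the group identity $e$, the integrand equals $A$ precisely at $g = e$; the entire problem therefore becomes that of concentrating $\mu_{\phi_n}$ at $e$.

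Next I would reduce the target convergence to such concentration. Fix a unit vector $\psi \in \his$ and set $f(g) := \ip{\psi}{U_{\Sy}(g) A U_{\Sy}(g)^* \psi}$. Strong continuity of $U_{\Sy}$ together with boundedness of $A$ makes $f$ bounded (by $\no{A}$) and continuous, with $f(e) = \ip{\psi}{A\psi}$. For any neighbourhood $V$ of $e$ the split
\begin{equation}
\left| \int_G f\, d\mu_\phi - f(e) \right| \leq \sup_{g \in V}|f(g) - f(e)| + 2\no{A}\,\mu_\phi(G \setminus V)
\end{equation}
shows that $\ip{\psi}{(\Gamma_{\phi_n} \circ \Y)(A)\,\psi} \to \ip{\psi}{A\psi}$ whenever $\mu_{\phi_n}(V) \to 1$ for every neighbourhood $V$ of $e$; as this holds for every unit $\psi$ (and trivially for general $\psi$ by scaling), it is exactly the stated notion of weak convergence. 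It then remains only to build a single sequence, independent of $A$ and $\psi$, for which $\mu_{\phi_n}$ concentrates at $e$. Since $G$ ($=\mathbb{R}$ or $S^1$) is metrisable, choose a decreasing neighbourhood basis $V_1 \supset V_2 \supset \cdots$ at $e$ with $\bigcap_k V_k = \{e\}$. For each $k$ with $\Fsf(V_k) \neq 0$ the norm-$1$ property supplies a unit vector $\phi_k$ with $\ip{\phi_k}{\Fsf(V_k)\phi_k} > 1 - 1/k$, i.e.\ $\mu_{\phi_k}(V_k) > 1 - 1/k$. Since any neighbourhood $V$ of $e$ contains some $V_K$, for $k \geq K$ we obtain $\mu_{\phi_k}(V) \geq \mu_{\phi_k}(V_k) \to 1$, which is the concentration required above.

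The main obstacle I anticipate is not the limit argument itself but the localisation book-keeping surrounding it: justifying that $\Gamma_\phi$ may legitimately be pulled through the operator-valued integral \eqref{eq:dyen} (handled via \eqref{eq:red} tested against states), confirming that $\Fsf(V_k) \neq 0$ for arbitrarily small neighbourhoods $V_k$ (which for a covariant $\Fsf$ follows because the $U_{\R}$-translates of any nonempty open set cover $G$, forcing every nonempty open set to carry a nonzero effect), and, in the noncompact case $G = \mathbb{R}$, controlling the tail $\mu_\phi(G \setminus V)$ uniformly, which the boundedness of $f$ together with the concentration estimate takes care of. I regard the identification of \eqref{eq:composite} as a $\mu_\phi$-average and the recognition that the norm-$1$ property is precisely the localisability needed to drive $\mu_{\phi_n}$ onto the identity as the conceptual heart of the argument.
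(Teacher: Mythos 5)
Your proposal is correct and follows essentially the same route as the paper's own argument: both rest on the identity $(\Gamma_{\phi_n}\circ\Y)(A)=\int_G U_{\Sy}(g)AU_{\Sy}(g)^*\,\mu^{\Fsf}_{\phi_n}(dg)$ and on using the norm-$1$ property to concentrate the probability measures $\mu^{\Fsf}_{\phi_n}$ at the identity of $G$. Your write-up merely fills in the continuity/tail estimate and the non-vanishing of $\Fsf$ on small neighbourhoods, details the paper delegates to \cite{long}.
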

The essence of the proof (see \cite{long} for the details) lies in the observation that
\begin{equation}
(\Gamma_{\phi_{n}}\circ \Y)(A) = \int _{G} U(g)^*AU(g) \mu _{\phi_n}^{\Fsf}(dg),
\end{equation}
 and that if $\Fsf$ satisfies the norm-1 property then we can make $\mu _{\phi_n}^{\Fsf}$ as concentrated as we like, i.e., approximately contained in any measurable set. Fixing this to be
 centred on $0$ (the identity of $G$) then yields the result. The localisation of the state
 at the identity of $G$ acts as a sort of ``encoding" of this element of $G$ in $\hir$ and serves as the zero-reference. 
 
There are various further observations to be made. The noise operator, subject to the already-discussed 
substantial caveat of not being operationally meaningful in general, can be used to compare 
$\Y(A)$ and $A$ for representations of $G$ generated by the additive quantity
$L = L_{\Sy} + L_{\R}$ on $\his \otimes \hir$, with state $\varphi \otimes \xi$ (see \ref{ssqi}):

\begin{equation}\label{eq:tot}
\epsilon ^2 \geq \epsilon (\varphi)^2 \geq \frac{1}{4} \frac{\mods{\langle [\Y(A)-A, L_{\Sy} + L_{\R}]\rangle};}{(\Delta L)^2}
\end{equation}
with perfect agreement coming with $\epsilon = 0$. Now, $[\Y(A),L] = 0$ and therefore,
since $(\Delta L)^2 = (\Delta_{\varphi} L_{\Sy})^2 + (\Delta_{\xi} L_{\R})^2$ and $\varphi$ is arbitrary, if $[A,L_{\Sy}] \neq 0$, there is a positive lower bound on the discrepancy between $A$ and $\Y(A)$ which can only be made smaller
by increasing the spread of $L_{\R}$ in $\xi$ \cite{ldlthesis}. This spread of course coincides with
a highly localised conjugate quantity, if one exists, for instance position conjugate to momentum,
angle conjugate to angular momentum, or even unsharp phase conjugate to number. 

Actually, the above results can be substantially improved and can be shown to hold in
operational terms -- see \cite{appr}. There it is also shown that a ``large" apparatus is needed
for good agreement between $A$ and $\Y(A)$ (for general effects $A$), and that poor localisation
gives poor approximation. Moreover, it is proven there that this type of behaviour persists
even when relative observables are not obtained via the $\Y$ map, i.e., good agreement between
an arbitrary quantity of $\Sy$ and an invariant one of $\Sy + \R$ requires large spread in the symmetry generator on $\hir$. This corresponds to high localisation in a conjugate quantity, should there be one.

This demonstrates that the reference frame, as a physical system,
is fulfilling its role appropriately in serving to define the absolute quantities of $\Sy$. That
badly localised states give bad approximation indicates that the approximation of relative by absolute is only as good as the reference system allows. This is most vividly born out in the case of position or angle, where the resulting absolute quantities $\Gamma_{\phi}(\cdot)$ after restriction are also not localisable.

If the reference is not sharply localised this is not a good reference for the ``absolute" position, as one might well expect. In the extreme situation that the reference state is completely delocalised (possible only in an approximate sense for position), the restricted quantity of $\Sy$ is shift-invariant, and thus commutes with momentum, and therefore says nothing about ``absolute" position at all. This situation is reminiscent of the apt metaphor in \cite{bj1}, where the concept of the
``edge of a mountain" is presented as one that is inherently vague, in order to facilitate the exposition of the idea of \emph{unsharp reality}, which may be seen as one of Paul Busch's major contributions to the philosophy of physics. The edge of a mountain, as intrinsically vague, does 
not function as a good reference for position. The position of a person relative to the edge of a mountain cannot be any more precisely specified than one can specify the edge of a mountain.
In quantum mechanics, as also elucidated in
\cite{bj1}, the vagueness arises from the \emph{nature} of the physical entities themselves, and not in deficiency of the concept, and therefore the question of the existence and properties of suitable references for 
quantum physical quantities is a deep issue.

The relative localisation of physical objects 
appears to be so ubiquitous in the classical world that it feels perhaps incommodious to view localisation itself as a relational property. However, this may be one of the essential differences between
classical and quantum physics, and we see the reference localisation (and also the possibility of networks of systems which are localised relative to each other) as something of a classicality condition -- a topic that we believe deserves further attention.

\subsection{Measurement}

We now return to the quantum theory of measurement, and the measurement of relational
quantities, specifically those of the form $(\Y \circ \Esf)(X)$ where $\Y$ is implicitly defined
for some $G$ and relativising quantity $\Fsf$. Measurements of $\Y \circ \Esf$ will then
be compared to measurements of $\Esf$, and the need of reference localisation for good approximation will again
be highlighted.

We therefore make the shift from viewing the system to be measured as $\Sy$ to $\Sy + \R$, i.e., let $\his \otimes \hir$
denote the Hilbert space of the system to be measured, and introduce the space $\hia$ to represent the apparatus. Therefore the total space is $\his \otimes \hir \otimes \hia $. Let $\tilde{\Esf} = \Y \circ \Esf: X \to L(\his \otimes \hir)$, which as an invariant quantity can be measured without constraint by a suitable pointer $Z$ on $\hia$. The probability reproducibility condition for the measurement of $\tilde{\Esf}$ can be written
\begin{equation}
\ip{\psi}{(\Y \circ \Esf)(X)\psi} = \ip{\psi}{\tilde{\Esf}(X)\psi} = \ip{U (\psi \otimes \phi)}{\id \otimes \Zsf (X) U (\psi \otimes \phi)},
\end{equation}
holding for all $X$ and unit vectors $\psi \in \his \otimes \hir$. Writing $\Zsf (X) (\tau) := U^* \id \otimes \Zsf (X) U$, this can be rewritten as 
\begin{equation}
(\Y \circ \Esf)(X)=\tilde{\Esf}(X) = \Gamma_{\phi}(\Zsf (X) (\tau)),
\end{equation} 
and $\tilde{\Esf}$ is the measured observable.

Now we want to see whether the same scheme can be re-purposed so as to function as 
a measurement scheme for $\Esf$. To do this, we fix some unit vector $\eta$ of the reference 
$\R$, which defines the observable $\Esf ^{(\eta)}$ of $\Sy$ through the formula
\begin{equation}
\ip{\varphi}{\Esf ^{(\eta)}(X) \varphi} = \ip{\varphi \otimes \eta}{\tilde{\Esf}(X)\varphi \otimes \eta}
\end{equation}
to hold, as always, for all unit $\varphi$ and $X$, which can also be written therefore at the operator level as $\Esf ^{(\eta)}(X)=\Gamma_{\eta}(\tilde{\Esf}(X))$. Then we may write
\begin{equation}
\ip{\varphi }{\Esf ^{(\eta)}(X) \varphi} = \ip{U(\varphi \otimes \eta \otimes \phi)}{\id \otimes \Zsf (X) U(\varphi \otimes \eta \otimes \phi)},
\end{equation}
where both $\phi$ and $\eta$ are fixed and may be viewed as part of the measuring apparatus.
This can be rewritten 
as $\Esf ^{(\eta)}(X) = \Gamma_{\eta \otimes \phi} (\Zsf (X) (\tau) = \Gamma_{\eta}(\Gamma_{\phi}(\Zsf (\tau) (X)))$.

Finally, we let $(\eta)_n$ be a localising sequence for $\Fsf$ at the origin (additive identity) of $G$, and take the high localisation limit, yielding 
\begin{equation}
\lim_{n \to \infty} \Gamma_{\eta _n}(\Gamma_{\phi}(\Zsf (\tau) (X))) = \Esf (X).
\end{equation}  

This observation highlights the central point of this investigation. The observable
$\tilde{\Esf}$ is a relational observable of $\Sy + \R$. This can be measured
whilst respecting the strong WAY theorem, and qualifies as a ``true observable" under
the principle that only invariants represent genuine physical quantities. However, under suitable reference preparation, arbitrary {\sc pom}s of $\Sy$, after the suppression of $\R$ (or the absorption of $\R$ into the apparatus) can function \emph{as though} they are observable,
even though they are not actually represented in physical reality. The localisation of the
reference state is the {\it sine qua non} of a well-functioning reference: it provides the
zero to which all else is then referred. What's more, high localisation with respect to
one quantity corresponds to large spread in a conjugate quantity, should one exist. High phase
localisation corresponds to large spread in number, high position localisation to large momentum spread, and so on.

The Ozawa position measurement scheme provides a good example of this behaviour. Instead of the absolute position, we instead use the same set-up to measure the relative position $Q \otimes \id - \id \otimes Q_{\mathcal{R}}\equiv Q-Q_\R$. Crucially, $P +P_{\R}$ is separately conserved and therefore falls under the remit of the strong WAY theorem.
The unique measured {\sc pom} in this scheme is $\widetilde{\mathsf{E}}: \mathcal{B} (\mathbb{R}) \to \mathcal{L}\bigl(\mathcal{H}_{\Sy}\otimes \mathcal{H}_{\R}\bigr) \equiv \mathcal{L}(L^{2} (\mathbb{R}^2)) $, extracted from the condition (writing $\Psi _{\tau} = U(\tau)\Psi_0$)
\begin{equation}\label{oz1}
\left\langle \Psi _{\tau }|\id\otimes \id\otimes 
\Zsf(f^{-1}(X)) \otimes \id \Psi _{\tau }\right\rangle
=\left\langle \varphi \otimes \phi| \widetilde{\mathsf{E}}(X)\varphi\otimes \phi \right\rangle,
\end{equation}
required to hold for all unit $\varphi \in \his$  and $\phi \in \hir$. It then follows that
\begin{equation}
\widetilde{\mathsf{E}}(X)=\chi _{X}\ast \widetilde{e}^{(\lambda )}(Q-Q_{\mathcal{R}}),
\end{equation}
where the right-hand side is the convolution of the set indicator function $\chi_X$ with the probability distribution $\widetilde{e}^{(\lambda )}(x)=\bigl\vert \xi_a^{(\lambda )}(x)\bigr\vert ^{2}$ with $\xi_a ^{(\lambda)}(s) = \sqrt{\lambda}\xi_a(\lambda s)$.

We see that $\widetilde{\mathsf{E}}$ is a smeared or unsharp version of the sharp $\Esf^{Q-Q_{\mathcal{R}}}$, and the quality of the approximation is again dictated by the spread of the density/confidence function $\widetilde{e}^{(\lambda )}$. The variance of $\widetilde{e}^{(\lambda )}$ is ${\var}(\widetilde{e}^{(\lambda )})=\frac{4}{\lambda ^{2}}{\var}\left\vert \xi_a\right\vert ^{2}$. Therefore by tuning $\lambda$ to be large, arbitrarily accurate measurements of $Q-Q_{\mathcal{R}}$ can be achieved, with no size/localisation requirements, and conclusion holds also for the overall width \cite{ldlthesis}.

It is only when trying to rehabilitate $Q$ as the observable-to-be-measured that the localisation
requirement reappears. By fixing $\phi$, the measurement scheme can be viewed as ``measuring" a {\sc pom} $\Esf$ for $\Sy$, in which we then return to the situation previously investigated:
$$
\ip{\varphi \otimes \phi}{\widetilde{\Esf}(X)\varphi \otimes \phi} =: \ip{\varphi}{\Esf (X) \varphi}.
$$ 
The probability distribution for the relative position
has thereby been re-expressed in terms of a ``smeared" distribution for the ``absolute" position by considering a fixed reference state $\phi$ of $\R$. 
The approximation error of $\Esf$ relative to $\Esf^Q$ is given by
$\var (e^{(\lambda)}) = \var \mods{\phi} + \frac{4}{\lambda ^2} \var \mods{\xi_a}$ (or, again, the overall width measure).
The probability distributions corresponding to the relative coordinate in the states $\varphi \otimes \phi$ become indistinguishable from those of the ``absolute" coordinate $Q$ in the limit that the localisation of the state $\phi$ with respect to $Q_{\mathcal{R}}$ 
is arbitrarily good (provided also that $\lambda$ is tuned to be large). 

Therefore we see that the limitation arises only when we attempt to ``measure" the non-invariant $Q$, and not the invariant $Q - Q_{\R}$ or, from the more WAY-type reasoning,
the measurement of $Q$, which does not commute with $P$, gives a limitation, but  $Q - Q_{\R}$ which does
commute with $P+P_{\R}$ does not, and this can be seen from within the measurement scheme of Ozawa.

The Ozawa model differs from the standard WAY set-up in that it includes the possibility of 
incorporating the reference as either part of the system or part of the apparatus. The standard WAY scenario, in our view, must be understood as pertaining to the situation wherein the single apparatus system must function also as the reference -- a situation to which we now turn.

\section{Relational view of the WAY theorem}\label{sec:relw}

We are now in position to outline the main conclusion of this work; first we briefly recap.
In section \ref{sec:way} we discussed the WAY theorem in its usual incarnation as presented in
\cite{LB2011,ldlthesis,lb2}, learning that 
observables (represented by self-adjoint operators) not commuting with the system part of an additive conserved quantity cannot be measured precisely in any measurement scheme satisfying the Yanase condition. However, we also saw that approximate measurements are possible, in the sense that the actually measured {\sc pom} becomes
statistically close to the ``target" sharp observable that we wish to measure, and that this comes
at the price of having to prepare an initial apparatus state with a large spread in the conserved quantity.

We then approached the question of characterising observability from a different perspective.
We provided two arguments regarding the imposition of symmetry on observability. The first was that fundamental symmetries limit what is observable to those quantities which are invariant under the symmetry transformation. This may be, for example, invariance under the action of the Galilei group or certain subgroups, the invariance then being a statement of the lack of absolute space, time and angle. 
The case of internal symmetries such as phase are bound by the same restriction, though the interpretation seems to be more directly linked to the fact that phase is inherently relative and is more akin to gauge. 

The second argument was derived from the quantum theory of measurement itself, and pertained to
systems which are isolated in the sense that they possess conserved quantities. Given this, from both points of view the conclusion is the same: no {\sc pom} not invariant under the symmetry action is (an) observable. We argued that the use of non-invariant quantities in the empirically
accurate description of quantum systems is a reflection of the fundamental point that such quantities are defined only with respect to a second system---a reference system---which can be externalised \emph{contingent} upon the preparation of a reference state highly localised 
in the appropriate variable.

Given that the compatibility condition  $[A,L_{\Sy}]=0$ in the context of the WAY theorem can be recast as the symmetry/invariance statement $A = e^{i \ell L_{\Sy}}Ae^{-i \ell L_{\Sy}}$, we observe that the conclusion of WAY, though conveying a similar moral to the point of view espoused in the subsequent section,
does not completely agree with it, because it does not rule out approximate measurements, i.e.,
unsharp approximators of $A$, described by a {\sc pom} $\E$ which is not compatible with $L_{\Sy}$.\footnote{The compatibility of a {\sc pom} $\Esf$ and a self-adjoint operator $M$ is equivalent
to the statement that $[\Esf(X),\Esf^M(Y)]=0$ for all $X$ and $Y$, where $\Esf^M(Y)$ is the spectral measure of $M$; see \cite{oqp}.} In our view, the situation encountered in the WAY theorem
is an expression of the general perspective that we have been advocating, and the possibility of approximate ``measurements" is actually an expression of the possibility of approximating relative observables by their unobservable/unmeasurable counterparts. We will now describe this in more detail.

\subsection{Measuring apparatus as reference frame}

We first recast the relevant compatibility conditions as invariance conditions, collecting together also some extra pieces of relevant information. Here the real parameter $\ell$ can be taken to lie in $\mathbb{R}$ or $S^1$. The physical setting is arranged for a measurement to take place, and thus we have systems $\Sy$ and $\Ap$, described together by the space $\his \otimes \hia$, together 
with a unitary coupling $U$, and so on. 
\begin{itemize}
\item Invariance of $\Sy$-observables: $[A,L_{\Sy}] =0$, if and only if $A = e^{i \ell L_{\Sy}}Ae^{-i \ell L_{\Sy}} \equiv \alpha^{\Sy}_{\ell}(A)$, if and only if $[\Esf^A(X), L_{\Sy}]=0$, if and only if $\alpha^{\Sy}_{\ell}(\Esf^A(X)) = \Esf^A(X)$ for all $X$. All of this generalises to the case where $\Esf^A$ is replaced by a {\sc pom} $\Esf$.
\item Conservation law for $\Sy + \Ap$: $[U,L_{\Sy} \otimes \id + \id \otimes L_{\Ap}] \equiv [U,L]=0$.
\item Yanase condition: $[Z,L_{\Ap}]=0$, which is equivalent to $Z = e^{i \ell L_{\Ap}}Ze^{-i \ell L_{\Ap}} \equiv \alpha^{\Ap}_{\ell}(Z)$, which under the conservation law $[U,L]=0$ is equivalent to $[U^*(\id \otimes Z)U,L] = 0$, or $[Z(\tau),L]=0$. This latter condition says that
the time evolved pointer $Z(\tau)$ must be invariant under $Z(\tau)\mapsto e^{i \ell L}Z(\tau)e^{-i \ell L}$.
\item \emph{Weak Yanase condition}: $[Z(\tau),L]=0$, i.e.,
the time evolved pointer $Z(\tau)$ must be invariant under $Z(\tau)\mapsto e^{i \ell L}Z(\tau)e^{-i \ell L}$ (this can be stipulated without the conservation or additivity of $L$).\footnote{That the WAY theorem admits a generalisation
in terms of the weak Yanase condition has been pointed out by Tukiainen in \cite{mikko}.}
\end{itemize}

The WAY theorem in its usual reading, with the Yanase condition in place of repeatability,
is ostensibly about the compatibility of $L_{\Ap}$ and $Z$ being ``inherited" by $A$ and $L_{\Sy}$.
Refashioning the compatibility conditions as invariance conditions allows for the impossibility part of the WAY theorem to be recast in 
the following (slightly informal) form: For any $Z$ satisfying the (weak) Yanase condition,
there is no measurement scheme which (perfectly) realises a measurement of (the self-adjoint operator) $A$ unless $A = \alpha^{\Sy}_{\ell}(A)$. Put yet another way, no non-invariant
$A$ can satisfy $A = \Gamma_{\phi} (Z(\tau))$. Given that $Z(\tau)$ is an invariant,
the connection with the discussion on reference frames becomes apparent. 

{\it A fortiori}, the ``positive" part of WAY asserts that $A$ and $\Gamma_{\phi} (Z(\tau))$
can become close, exactly when $\phi$ has a large $L_{\Ap}$--spread. In many instances of physical
interest, for instance when $L_{\Ap}$ is the momentum or angular momentum, this corresponds 
exactly to localisation in the conjugate variable (here, position or angle), and therefore
we see an extremely compelling link to the theory of quantum reference frames, where reference localisation was the key property for good approximation.


Now, we argue, that the WAY theorem as traditionally stated can be understood as pertaining
to the case wherein the apparatus $\Ap$ is functioning {\it both} as a measuring apparatus
in the traditional sense, \emph{and} as a reference system serving to define the relational observables which are traditionally treated as absolute. Indeed, there is no other system present which could serve this purpose -- in the standard set-up of WAY, there is no ``$\R$" system in addition to system and apparatus, but $\Sy + \Ap$ is certainly isolated, and the conservation of $L_{\Sy} + L_{\Ap}$ entails by strong WAY that no quantity of $\Sy + \Ap$ which does not commute with $L=L_{\Sy} + L_{\Ap}$/is not invariant under the symmetry action generated by $L$ can be measured ``from the outside" of $\Sy + \Ap$. This
is precisely the statement of the weak Yanase condition, i.e., that the time-evolved pointer $Z(\tau)$ is invariant. 

The localisation condition on the preparation of the reference frame corresponds exactly to large spread
in the conjugate quantity in the WAY setting, \emph{in the initial state of the apparatus}. 
There is therefore a strong harmony, both with respect to the impossibility part and the 
positive part. What is still unclear is the reference frames interpretation of WAY in the setting that the conserved quantity does not admit a conjugate quantity, in which case the large spread in the conserved quantity does not in any natural way correspond to a localised state with respect to another quantity. Understanding this remains as work to be done, although we note that almost all
physically interesting conserved quantities do admit conjugates.

\section{Discussion, future directions and closing Remarks}

Our conclusion, simply put, is that the Wigner-Araki-Yanase theorem, originally an expression of 
the limitations of the measurability of quantities not commuting with (the system part of) an additive conserved quantity, can be instead interpreted as a statement about the (im)possibilities of (unmeasurable) absolute quantities being good representations of observable relative quantities. 

The work presented comprises part of the vision to understand all aspects of quantum theory
from a relational viewpoint. There has already been much progress in this direction, and contemporary efforts are perhaps most strongly focused on understanding better the form of relationalism associated 
with quantum reference frames. By now the relational description of states and observables
is fairly well established, and the situation for quantum channels is described in a concurrent publication in this volume. The rejuvenation of foundational interest in quantum reference frames has spurred further recent progress in manifold directions. One major example arises from the 
need to describe transformations between quantum frames in the spirit of relativistic physics; see for example \cite{pal1,van1,flam1,gal1}. {\it Inter alia}, other recent work has provided
insights into the relative nature of coherence and superpositions \cite{long,zych1}, entanglement \cite{flam1}, the `Wigner's friend' thought experiment \cite{gal1}, and the nature of time in quantum mechanics \cite{time,smi1,smi2}. This latter topic   
is related in deep ways to the ``problem of time" in quantum gravity, and there are encouraging signs that the reference frames point of view may provide new insight there and in (quantum) gravitational physics more generally \cite{ho1,flam2,chat}.

The relational nature of quantum mechanics bears of course on the deep questions in quantum mechanics and beyond -- the distinction between subject and object, observer and observed or, to return to the elegant words of Heisenberg, between player and spectator. 
The most important question, however, takes us back to the beginning and to the WAY theorem, and is also the one that was closest to Paul's heart: what does all of this say about quantum measurement?

\section*{References}


\begin{thebibliography}{99}

\bibitem{heis} Heisenberg W 2000 \emph{Physics and Philosophy} (London: Penguin)

\bibitem{bor2} Bohr N 1949 Discussion with Einstein on epistemological problems in atomic physics \emph{Albert Einstein, Philosopher-Scientist} (The Library of Living Philosophers vol 7) ed A P Schilpp (Wisconsin: Banta) pp 199--241 

\bibitem{rov1}  Rovelli  C 1996 Relational quantum mechanics  {\it Int. J. Theor. Phys}. {\bf 35} 1637--78 

\bibitem{ben1} Bene G 1997 Quantum reference systems: a new framework for quantum mechanics {\it Physica A} {\bf 242} 529--65

\bibitem{die1} Bene G and Dieks D 2002 A perspectival version of the modal interpretation of quantum mechanics and the origin of macroscopic behavior {\it Found. Phys.} {\bf 32} 645--71

\bibitem{die2} Dieks D 2019 Quantum reality, perspectivalism and covariance {\it Found. Phys.} {\bf 49} 629--46

\bibitem{shortp} Loveridge L, Busch P and Miyadera T 2017 Relativity of quantum states and observables {\it EPL} {\bf 117}  40004

\bibitem{long} Loveridge L, Miyadera T and Busch P 2018 Symmetry, reference frames and relational quantities in quantum mechanics {\it Found. Phys.} {\bf 48} 135--98 

\bibitem{ak1} Aharonov Y and Kaufherr T 1983 Quantum frames of reference {\it Proc. International Symposium on the Foundations of Quantum Mechanics in the Light of New Technology} ed S Kamefuchi and N B Gakkai (Tokyo: Physical Society of Japan) pp 190--4

\bibitem{ak2} Aharonov Y and Kaufherr T 1984 Quantum frames of reference {\it Phys. Rev.} D {\bf 30} 368--85  

\bibitem{brs} Bartlett S D, Rudolph T and Spekkens R W 2007 Reference frames, superselection rules, and quantum information {\it Rev. Mod. Phys.} {\bf 79} 555--609 

\bibitem{LB2011} Loveridge L D and Busch P 2011 `Measurement of quantum mechanical operators' revisited {\it Eur.~Phys.~J.}~D {\bf 62} 297--307

\bibitem{lb2} Busch P and Loveridge L 2011 Position measurements obeying momentum conservation {\it Phys. Rev. Lett.} {\bf 106} 110406

\bibitem{ldlthesis} Loveridge L D 2012 {\it Quantum Measurements in the Presence of Symmetry} (University of York, UK: PhD thesis) {\tt http://etheses.whiterose.ac.uk/2670/}

\bibitem{blcon} Busch P and Loveridge L D 2013 Quantum measurements constrained by symmetries
\emph{Symmetries and Groups in Contemporary Physics} (Nankai Series in Pure, Applied Mathematics and Theoretical Physics vol 11) ed C Bai {\it et al} pp 587--92

\bibitem{oqp} Busch P, Grabowski M and Lahti P J 1995 {\it Operational Quantum Physics} (Berlin: Springer)

\bibitem{paulmmt} Busch P, Lahti P J and Mittelstaedt P 1996 {\it The Quantum Theory of Measurement } (Berlin, Springer) 

\bibitem{pbook} Busch P, Lahti P, Pellonp\"{a}\"{a} J-P and Ylinen K 2016 {\it Quantum Measurement} ([Cham]: Springer)

\bibitem{buschnie} Busch P 2003 The role of entanglement in quantum measurement and information
processing {\it Int. J. Theor. Phys.} {\bf 42} 937--41

\bibitem{mitt} Mittelstaedt P 2004 {\it The Interpretation of Quantum Mechanics and the Measurement Process} (Cambridge: Cambridge University Press) 

\bibitem{ozcont} Ozawa M 1984 Quantum measuring processes of continuous observables {\it J. Math. Phys.} {\bf 25} 79--87 

\bibitem{vN} von Neumann J 1932 {\it Mathematische Grundlagen der Quantenmechanik} (Berlin: Springer) English translation 1955
{\it Mathematical Foundations of Quantum Mechanics} (Princeton: Princeton University Press)

\bibitem{fewster} Bostelmann H, Fewster C J and Ruep M H 2020 Impossible measurements require impossible apparatus {\it Preprint} arXiv:2003.04660 [quant-ph]

\bibitem{www} Wick G C, Wightman A S and Wigner E P 1952 The intrinsic parity of elementary particles {\it Phys. Rev.} {\bf 88} 101--5


\bibitem{wig1} Wigner E P 1952  Die Messung quantenmechanischer Operatoren {\it Z. Phys.} {\bf 133} 101--8

\bibitem{buschtrans} Busch P 2010 Translation of ``Die Messung quantenmechanischer Operatoren'' by E. P.~Wigner {\it Preprint} arXiv:1012.4372 [quant-ph]

\bibitem{yanase} Yanase M M 1961 Optimal measuring apparatus {\it Phys. Rev.} {\bf 123} 666--8
 
\bibitem{ay1} Araki H and Yanase M M 1960 Measurement of quantum mechanical operators {\it  Phys. Rev.} {\bf 120} 622--6

\bibitem{oz02} Ozawa, M 2002 Conservation laws, uncertainty relations, and quantum limits of measurements {\it Phys. Rev. Lett.} {\bf 88} 050402

\bibitem{belt}  Beltrametti E G, Cassinelli G and Lahti P J 1990 Unitary measurements of discrete quantities in quantum mechanics {\it J. Math. Phys.} {\bf 31} 91--8 

\bibitem{Ozawa} Ozawa M 1991 Does a conservation law limit position measurements? {\it Phys. Rev. Lett.} \textbf{67} 1956--9 


\bibitem{buschnid} Busch P 2009 ``No information without disturbance": quantum limitations of measurement {\it Quantum Reality, Relativistic Causality, and Closing the Epistemic Circle} (Western Ontario Series in Philosophy of Science vol 73)  ed W C Myrvold and J Christian  (Dordrecht: Springer) pp 229--56

\bibitem{ssp} Shimony A and Stein H 1979 A problem in Hilbert space theory arising from the quantum theory of measurement {\it Amer. Math. Month.} {\bf 86} 292--3 



\bibitem{beq} Shimony A 2009 Unfinished work: a bequest   {\it Quantum Reality, Relativistic Causality, and Closing the Epistemic Circle} (Western Ontario Series in Philosophy of Science vol 73) ed W C Myrvold and J Christian  (Dordrecht: Springer) pp 479--91

\bibitem{brmp} Busch P, Lahti P and  Werner R F 2014 Quantum root-mean-square error and measurement uncertainty relations {\it Rev. Mod. Phys.} {\bf 86} 1261--81 

\bibitem{mikko} Tukiainen M 2017 Wigner-Araki-Yanase theorem beyond conservation laws {\it Phys. Rev.} A {\bf 95} 012127 

\bibitem{luc} \L{}uczak A 2016 Wigner-Araki-Yanase theorem in general theory of quantum measurement {\it Op. Sys. \& Inf. Dyn.} {\bf 23} 1650013 


\bibitem{pop1} Navascu\'{e}s M and Popescu S 2014 How energy conservation limits our measurements {\it Phys. Rev. Lett.} {\bf 112} 140502 

\bibitem{hamdog} Mohammady M H and Anders J 2017 A quantum Szilard engine without heat from a thermal reservoir {\it New J. Phys.} {\bf 19} 113026

\bibitem{ham2} Mohammady M H and Romito A 2019 Conditional work statistics of quantum measurements {\it Quantum} {\bf 3} 175--92

\bibitem{ahmadi1} Ahmadi M, Jennings D and Rudolph T 2013 The Wigner–Araki–Yanase theorem and the quantum resource theory of asymmetry {\it New J. Phys.} {\bf 15} 013057


\bibitem{gour1} Chitambar E and Gour G 2019 Quantum resource theories {\it Rev. Mod. Phys.} {\bf 91} 025001

\bibitem{spek1} Marvian I and Spekkens R W 2012 An information-theoretic account of the Wigner-Araki-Yanase theorem {\it Preprint} arXiv:1212.3378 [quant-ph]

\bibitem{haag} Haag R 1996 {\it Local Quantum Physics} (Berlin: Springer)

\bibitem{time} Loveridge L and Miyadera T 2019 Relative quantum time {\it Found. Phys.} {\bf 49} 549--60

\bibitem{ein} Einstein A 1920 {\it Relativity: The Special and the General Theory} (New York: Holt)

\bibitem{davies} Davies E B 1976 {\it Quantum Theory of Open Systems} (London: Academic Press) 

\bibitem{rp1} Pellonp\"{a}\"{a} J-P 2002 {\it Covariant Phase Observables in Quantum Mechanics} (Annales Universitatis Turkuensis AI vol 288) (University of Turku: PhD thesis)  

\bibitem{rp2} Heinonen T, Lahti P and Pellonp\"{a}\"{a} J-P  2003 Covariant phase difference observables in quantum mechanics {\it J. Math. Phys.} {\bf 44} 466--79 

\bibitem{LLandJW} Waldron J and Loveridge L (unpublished)

\bibitem{no1} Heinonen T, Lahti P,  Pellonp\"{a}\"{a} J.-P, Pulmannova S and Ylinen K 2003 The norm-1-property of a quantum observable {\it J. Math. Phys.} {\bf 44} 1998--2008 

\bibitem{appr} Miyadera T, Loveridge L and Busch P 2016 Approximating relational observables by absolute quantities: a quantum accuracy-size trade-off {\it J. Phys. A: Math. Theor.} {\bf 49} 185301 


\bibitem{bj1} Busch P and Jaeger G 2010 Unsharp Quantum Reality {\it Found. Phys} {\bf 40} 1341--67 

\bibitem{gal1} De la Hamette A-C and Galley T 2020 Quantum reference frames for general symmetry groups {\it Preprint} arXiv:2004.14292 [quant-ph]

\bibitem{pal1} Palmer M C, Girelli F and Bartlett S D 2014 Changing quantum reference frames {\it Phys. Rev.} A {\bf 89} 052121  

\bibitem{van1} Vanrietvelde A, H\"{o}hn P A, Giacomini F and Castro-Ruiz E 2020 A change of perspective: switching quantum reference frames via a perspective-neutral framework {\it Quantum} {\bf 4} 225--60

\bibitem{flam1} Giacomini F, Castro-Ruiz E and Brukner \v{C} 2019 Quantum mechanics and the covariance of physical laws in quantum reference frames {\it Nat. Comm.} {\bf 10} 494

\bibitem{zych1} Zych M, Costa F and Ralph T C 2018 Relativity of quantum superpositions {\it Preprint} arXiv:1809.04999 [quant-ph]


\bibitem{smi1} Smith A R H 2019 {\it Detectors, Reference Frames, and Time} (Springer Theses) (Cham: Springer)

\bibitem{smi2} Smith A R H and Ahmadi M 2019 Quantizing time: interacting clocks and systems {\it Quantum} {\bf 3} 160--71

\bibitem{ho1} H\"{o}hn P A, Smith A R H and Lock M P E 2019 The trinity of relational quantum dynamics {\it Preprint} arXiv:1912.00033 [quant-ph]


\bibitem{flam2} Castro-Ruiz E, Giacomini F, Belenchia A and Brukner \v{C} 2020 Quantum clocks and the temporal localisability of events in the presence of gravitating quantum systems {\it Nat. Comm.} {\bf 11} 2672

\bibitem{chat} Chataignier L 2020 Relational observables, reference frames and conditional probabilities {\it Preprint} arXiv:2006.05526 [quant-ph]
















\end{thebibliography}
\end{document}